\journal{International Journal of Electrical Power and Energy Systems
}
\DeclareMathOperator*{\subjectto}{subject\ to}
\DeclareMathAlphabet\mathbfcal{OMS}{cmsy}{b}{n}
\newtheorem{asmp}{Assumption}
\newtheorem{myprs}{Proposition}
\newcommand{\mat}[1]{\boldsymbol{#1}}
\newcommand{\bmat}[1]{\begin{bmatrix} #1 \end{bmatrix}}
\providecommand{\mA}{\ensuremath{\mat{A}}}
\providecommand{\mB}{\ensuremath{\mat{B}}}
\providecommand{\mC}{\ensuremath{\mat{C}}}
\providecommand{\mD}{\ensuremath{\mat{D}}}
\providecommand{\mF}{\ensuremath{\mat{F}}}
\providecommand{\mI}{\ensuremath{\mat{I}}}
\providecommand{\mJ}{\ensuremath{\mat{J}}}
\providecommand{\mK}{\ensuremath{\mat{K}}}
\providecommand{\mO}{\ensuremath{\mat{O}}}
\providecommand{\mP}{\ensuremath{\mat{P}}}
\providecommand{\mQ}{\ensuremath{\mat{Q}}}
\providecommand{\mR}{\ensuremath{\mat{R}}}
\providecommand{\mW}{\ensuremath{\mat{W}}}
\providecommand{\mZ}{\ensuremath{\mat{Z}}}
\newcommand{\m}{\boldsymbol}
\newcommand{\mc}[1]{\mathcal{#1}}
\newcommand{\mbb}[1]{\mathbb{#1}}
\newcommand{\mr}[1]{\mathrm{#1}}
\DeclarePairedDelimiter\abs{\lvert}{\rvert}%
\DeclarePairedDelimiter\norm{\lVert}{\rVert}%
\let\oldabs\abs
\def\abs{\@ifstar{\oldabs}{\oldabs*}}
\let\oldnorm\norm
\def\norm{\@ifstar{\oldnorm}{\oldnorm*}}
\begin{document}
	\begin{frontmatter}

		\title{{Wide-Area Feedback Control for Renewables-Heavy Power Systems: A Comparative Study of Reinforcement Learning and Lyapunov-Based Design}}
		
		%\title{Reinforcement Learning versus Lyapunov-Based Control for Renewables-Heavy Power Grids with Wind and Solar}
		\fntext[label2]{This work is supported by National Science Foundation under Grants 2152450 and 2151571.}
		
		%% use optional labels to link authors explicitly to addresses:
		%% \author[label1,label2]{}
		%% \affiliation[label1]{organization={},
			%%             addressline={},
			%%             city={},
			%%             postcode={},
			%%             state={},
			%%             country={}}
		%%
		%% \affiliation[label2]{organization={},
			%%             addressline={},
			%%             city={},
			%%             postcode={},
			%%             state={},
			%%             country={}}
		
		\author[inst1]{Muhammad Nadeem}
		
		\affiliation[inst1]{organization={Vanderbilt University},%Department and Organization
			addressline={ Civil and Environmental Engineering Department}, 
			city={Nashville},
			postcode={37235}, 
			state={TN},
			country={US}}
		\ead{muhammad.nadeem@vanderbilt.edu}
		\author[inst1]{MirSaleh Bahavarnia}
		\ead{mirsaleh.bahavarnia@vanderbilt.edu}
		\author[inst1]{Ahmad F. Taha}
		\ead{ahmad.taha@vanderbilt.edu}
\vspace{-0.5cm}
\begin{abstract}
As renewable energy sources become more prevalent, accurately modeling power grid dynamics is becoming increasingly more complex.  Concurrently, data acquisition and realtime system state monitoring are becoming more available for control centers. This motivates shifting from \textit{model- and Lyapunov-based} feedback controller designs toward \textit{model-free} ones. Reinforcement learning (RL) has emerged as a key tool for designing model-free controllers. Various studies have been carried out to study voltage/frequency control strategies via RL. However, usually a simplified system model is used neglecting detailed dynamics of solar, wind, and composite loads---and damping system-wide oscillations and modeling power flows are all usually ignored. To that end, we pose an optimal feedback control problem for a detailed renewables-heavy power system, defined by a set of nonlinear differential algebraic equations (NDAE).  The control problem is solved using a completely model-free design via RL as well as using a model-based approach built upon the Lyapunov stability theory with guarantees.  The paper in its essence seeks to explore whether data-driven feedback control should be used in power grids over its model-driven counterpart. Theoretical developments and thorough case studies are presented with an eye on this exploration.  Finally, a detailed analysis is provided to delineate the strengths and weaknesses of both approaches for renewables-heavy grids.
		\end{abstract}
	\vspace{-0.3cm}
		\begin{keyword}
			Reinforcement learning, solar and wind-based power plants, feedback control, Lyapunov stability.
		\end{keyword}
	\end{frontmatter}
\vspace{-0.6cm}
\section{Introduction and motivation}\label{section:intro}
\vspace{-0.1cm}
With the increased penetration of wind and solar-based energy resources, the overall transient stability of power systems is deteriorating. In particular, there is a significant increase in low and ultra-low frequency oscillations (LFOs and ULFOs)  in the future power grids with high penetration of renewables. These LFOs and ULFOs if not properly damped can cause system-wide instabilities and blackouts \cite{gupta2021coordinated}. State/output feedback controllers can play a crucial role in mitigating such oscillations and improving system transient stability after large disturbances. Based on realtime measurements, these feedback controllers can provide additional control signals to the power plants thus improving system robustness toward disturbances \cite{LiuITPWRS2021}.

Traditionally, in the literature, model-based approaches are utilized to design feedback controllers. These models, based on advanced differential-algebraic equations (DAEs)  provide precise, physics-based descriptions of system dynamics, allowing for the design of sophisticated feedback controllers that can effectively regulate frequency, improve LFOs/ULFOs, and other grid parameters \cite{RinaldiICSL2017, SiljakITPWRS2002, ZhangITPWRS2013, HadidiITSG2013}.  However, the performance of model-based control heavily relies on the accuracy of these models, which are inherently limited by assumptions and simplifications. 
Also, as  power networks become more dynamic and uncertain
due to the integration of renewables and other
distributed resources, accurately modeling their dynamics 
becomes an increasingly daunting task. Moreover, even if the accurate model is computed, re-adjusting it in realtime and re-designing the feedback control law (every time the model changes) becomes highly inefficient and impractical.  Thus, there is a growing motivation to transition toward completely model-free feedback control strategies.

Reinforcement learning (RL) is a key tool in the development of model-free feedback controllers due to its self-learning capabilities. RL allows the controller to autonomously learn the optimal control policy through continuous interaction with the environment while achieving predefined objectives. As a result, a variety of RL-based control algorithms have been proposed for power systems in recent years. For example, \cite{cui2022reinforcement, feng2024online} introduces a frequency control algorithm aimed at minimizing both the frequency nadir and the required control inputs. In \cite{WangITPWRS2020}, a voltage control strategy is developed using deep RL, where the RL algorithm minimizes voltage deviations across buses by actively adjusting generator bus voltages. Additionally, various RL-based automatic generation control (AGC) schemes have been proposed to regulate system frequency during transients. These include approaches such as Q-learning, actor-critic-based integral RL, and policy gradient (PG) techniques, as seen in \cite{yan2020multi}, \cite{wang2019multiobjective}, and \cite{singh2017distributed}. Similarly, reactive power control methodologies designed to improve system voltages using different RL approaches are discussed in \cite{yin2021emotional}. 

\textcolor{black}{More recent work has added stability guarantees to RL-based controllers, creating stability-constrained, model-free designs. For example, \cite{shi2022stability} proposes an RL method for real-time voltage control that uses a Lyapunov function to ensure formal voltage stability. In \cite{mukherjee2020distributed}, the authors introduce a distributed RL framework with stability guarantees for interconnected power systems, based on a diagonally dominant eigenvalue condition. Likewise, \cite{yuan2024reinforcement} combines Lyapunov stability theory and neural network controllers in an RL-based transient frequency control scheme. Additionally, \cite{wan2025stability} incorporates Lyapunov stability principles into RL-based power converter control, ensuring converter stability while improving performance. For a broader overview of RL applications in power system control, see the comprehensive surveys in \cite{chen2022reinforcement} and \cite{GLAVIC201922}.}

However, it's important to note that much of the current literature simplifies power system models by (1) neglecting algebraic constraints (power flow and balance equations), (2) using lower-order models for synchronous machine dynamics, and (3) overlooking the complexities of power electronics-based models for solar and wind generation. Moreover, many studies focus exclusively on minimizing frequency and voltage deviations, without addressing the critical issue of damping system-wide low-frequency and ultra-low-frequency oscillations (LFOs/ULFOs). This omission stems from not solving the complete optimal feedback control problem, such as the traditional linear-quadratic regulator (LQR)-type control design. As previously mentioned, addressing LFOs and ULFOs is essential for maintaining grid stability and optimizing power transfer capabilities \cite{gupta2021coordinated}.

Some recent efforts have been made to design various decentralized and centralized damping feedback controllers using RL. For example, in \cite{mukherjee2021scalable} using a reduced power system model a wide-area damping controller (WADC) has been proposed to improve system oscillations after disturbance. Similarly, in \cite{vrabie2009adaptive} authors have developed a WADC using a policy iteration algorithm, while the study \cite{jiang2012robust} has introduced a decentralized feedback controller design via an off-policy iteration-based RL technique. However, these studies also rely on simplified power system models, considering lower-order generator dynamics and neglecting algebraic constraints, dynamics of loads,  wind, and solar power plants. Using a simplified power system model during training can cause serious stability issues as the learned policy might be unstable when applied to the actual power system model with detailed dynamics.

Furthermore, it is clear that designing feedback controllers using a model-free approach via RL has a key advantage as it does not require the knowledge of system parameters, dynamics, or topology. However, some basic questions arise while solving an optimal state feedback control problem for renewables heavy power systems: How computationally efficient is model-free control as compared to model-based design? Does the model-free approach yield a \textit{better} control law as compared to the model-based approach? To that end, this paper addresses the aforementioned literature gaps and research questions by designing model-free and model-based WADCs for  renewables heavy power grids with detailed wind, solar, and composite load dynamics (given in Sec.~\ref{sec: System model}). The problem formulation is given in Sec.~\ref{sec:controller design}. The proposed model-free WADC is based on reinforcement learning (Sec. \ref{sec:RL WAC}) while the model-based controller is designed  via Lyapunov control theory (Sec. \ref{sec:Model-based WAC}), and Sec.~\ref{sec:case studies} presents thorough case studies and some conclusions. 

The technical contributions of this paper are as follows:

\begin{itemize}
	\item We propose WADC for renewables heavy power systems (with majority of power being generated by wind and solar-based power plants). The considered power systems comprise of conventional power plant modeled via a detailed nonlinear $9^{th}$-order dynamical model (modeling synchronous machine, exciter, and turbine dynamics), comprehensive power electronics based wind/solar power plants models, and composite load dynamics (constant power, constant impedance, and motor loads). The proposed WADCs act as a secondary control layer and are directly actuated via the primary controller layer (by sending additional control signals) of the power network.
	
	\item Two different approaches have been utilized to solve the same WADC problem for the considered power system. One is based on a completely model-free approach via reinforcement learning while the other methodology utilizes a model-based design. In particular, for the model-free approach, we leverage deep deterministic policy gradient (DDPG)-based algorithm to learn optimal control law by continuously interacting with the power system model. While for the model-based design, we use linear-matrix inequalities (LMIs)-based $\mathcal{H}_2$ stability notion along with Lyapunov control theory. 
	Also, to improve computational efficiency in the model-free approach, we use control-theoretic techniques to properly design and initialize the corresponding actor and critic neural networks as explained in Sec. \ref{sec:RL WAC}. 
	
	Thorough numerical simulations and discussions have been provided on modified IEEE 9-bus and 39-bus systems showcasing the different pros and cons of model-based and model-free WADC designs. Furthermore, to showcase the advantages of the proposed WADCs a comparative analysis has also been carried out by studying the transient response of the system with only primary controllers and with WADCs acting on top of them.
\end{itemize}

\noindent {\textbf{Notations:}}
Matrices and vectors are written in bold, while sets are represented using calligraphic fonts, such as $\mathcal{G}$ or $\mathcal{W}$. The notation $\mathbb{R}^{b}$ refers to the set of real-valued column vectors with \enquote*{b} elements, and $\mathbb{R}^{c \times d}$ denotes the set of real-valued matrices with \enquote*{c} rows and \enquote*{d} columns. The zero matrix is denoted by $\m{O}$, and the identity matrix of appropriate size by $\m{I}$. The union of two sets is represented by $\cup$, and the Kronecker product is indicated by $\otimes$. Additionally, $\mathbb{S}^{c \times d}_{++}$ refers to the set of positive definite matrices of size \enquote*{c} by \enquote*{d}. The asterisk $*$ in a symmetric matrix indicates that the entries are symmetric with respect to the main diagonal. All quantities are given in per unit (p.u.) unless otherwise noted. For simplicity, the time dependence of vectors is sometimes omitted in equations, e.g., $\m{x_d}(t)$ is written as $\m{x_d}$.

\vspace{-0.3cm}
\section{Renewables heavy power system with solar farms, wind farms, and composite loads}\label{sec: System model}
\vspace{-0.2cm}
We consider a power system model consisting of $G$ traditional power plants (both steam and hydro-based), $S$ solar power plants, $W$ wind power plants, and various loads: $L_k$ motor-based loads, $L_p$ constant impedance loads, and $L_z$ constant power loads. The electrical grid is represented as a graph, where $\mathcal{E}$ denotes the set of transmission lines and $\mathcal{N} = \left\lbrace 1, ..., N\right\rbrace$ represents the set of buses. These buses are categorized into different types: $\mathcal{G} = \left\lbrace 1, ..., G\right\rbrace$ corresponds to buses connected to traditional power plants, $\mathcal{S} = \left\lbrace 1, ..., S\right\rbrace$ to buses with solar farms, and $\mathcal{W} = \left\lbrace 1, ..., W\right\rbrace$ to buses connected to wind-based power plants. The set $\mathcal{L}$ represents buses connected to motor-based, constant impedance, and constant power loads, while $\mathcal{U}$ includes non-unit buses that are not connected to any generation or load elements.

With this setup, the power system is described by a set of nonlinear differential-algebraic equations (NDAEs) as follows\cite{nadeem2023RL}:
\begin{subequations}~\label{equ:PSModel}
	\begin{align}
		\dot{\m x}(t) &= \m f(\m x_d, \m x_a, \m w,  \m u) ~\label{equ:PSModel-a} \\
		\m 0 &= \m h(\m x_d,\m x_a, \m w,\m u). ~\label{equ:PSModel-b}
	\end{align}
\end{subequations}
In the above equations, differential equations \eqref{equ:PSModel-a} describe the models of traditional power plants, solar farms, wind power plants, and dynamics of composite loads while the algebraic equations \eqref{equ:PSModel-b} model the algebraic constraints (the power/current balance equations) in the network. The notation  $\m x_a(t) \in \mbb{R}^{n_a}$ represents algebraic variables, $\m x_d(t) \in \mbb{R}^{n_d}$ denotes dynamic variables, $\m u(t) \in \mbb{R}^{n_u}$ contains the control inputs, and $\m w(t) \in \mbb{R}^{n_w}$ denotes system disturbances. The detailed explanations of these vectors are given in \ref{appndix:ninth Gen_dynamics}. For brevity, the complete dynamical equations (set of ordinary differential equations)  describing the models of generators, solar, and wind power plants are not included in this paper and can be found in \cite{SoumyaITPWRS2022,pico2022blackstart,nadeem2023RL,sauer2017power}.

That being said, we can also express the power system model \eqref{equ:PSModel} in the following state-space format: 
\begin{align}\label{eq:final_NDAE}
\m E\dot{{\m x}} = \bar{\m A}{\m x}\hspace{-0.0cm}  + {{\bar{\m  B}}_u} {{\m u} } +\hspace{-0.0cm}  {\m f}\left({\m x},{\m w},{\m u} \right) + \bar{\m B}_w \m w
\end{align}
where $\m x(t) = \bmat{\m x_d^\top & \m x_a^\top}^\top \in\mbb{R}^{n}$ represents the overall state vector and $\m E \in\mbb{R}^{n\times n}$ is a singular matrix that encodes algebraic constraints, characterized by rows of zeros.   The constant matrices $\bar {\m A}\in\mbb{R}^{n\times n}$, $\bar{\m B}_u \in\mbb{R}^{n\times n_u}$, $\bar{\m B}_w \in\mbb{R}^{n\times n_w}$ map the state vector, control inputs, and the disturbance vector $\m w$ in the system dynamics. These matrices are determined by capturing the linear components of the model \eqref{equ:PSModel}  while the function ${\m f}\left({\m x},{\m w},{\m u} \right)$ accounts for any linearization errors present in the power system model.
\vspace{-0.5cm}
\section{Preliminaries and problem formulation}\label{sec:controller design}
\vspace{-0.2cm}
In this section, we address the state feedback control problem for the renewables-heavy interconnected power system model \eqref{eq:final_NDAE}. We first define the overall state feedback control problem and then propose two distinct solutions: one using a model-free approach via reinforcement learning, and the other using a model-based approach based on system matrix information and approximations of the nonlinear function ${\m f}\left({\m x},{\m w},{\m u} \right)$, employing Lyapunov stability theory.

In the designed system \eqref{eq:final_NDAE}, the dynamic states $\m{x_d}$—which represent conventional generators, renewable energy sources, and motor-based loads—are treated as dynamic components, while loads, non-unit buses, and other interconnections are considered static. For the purpose of designing a feedback controller, the algebraic state vector $\m{x_a}$, which includes voltage and current phasors, can be regarded as redundant and thus eliminated \cite{AranyaICSM2019}. Given that power system models are typically index-1 DAEs, these algebraic variables can be removed, allowing the conversion of the DAE into an equivalent ODE as follows:
\noindent Let us consider
\begin{align*}
	\bar{\m A} \hspace{-0.0cm}= \hspace{-0.0cm}\bmat{\m A_{dd}\;\; \m A_{da}\\ \m A_{ad} \;\; \m A_{aa}},\;  \bar{\m B}_u\hspace{-0.0cm} =\hspace{-0.0cm} \bmat{\m B_{ud}^\top\;\;\m B_{ua}^\top}\hspace{-0.0cm}^\top,\; \bar{\m B}_w \hspace{-0.0cm}=\hspace{-0.0cm} \bmat{\m B_{wd}^\top\;\;\m B_{wa}^\top}^\top \end{align*}
and assume that $\m A_{aa}$ is invertible (a common assumption in the literature of power systems). Then, we can extract the equation for $\m x_a$ and substitute it in the dynamic states equation to get the following ODE system:
\begin{align}\label{eq:final_NDAE}
	\dot{\m x}_d = {\m A}\m x_d\hspace{-0.0cm}  + {{{\m  B}}_u} {{\m u} } +\hspace{-0.0cm}  {\m f_d}\left(\m x_d,{\m w},{\m u} \right) + {\m B}_w \m w
\end{align}
where $\m f_d(.)$ represents the corresponding nonlinearity and the rest of the matrices are given as: 
\begin{align*}
	{\m A} &= \m A_{dd} - \m A_{da}\m A_{aa}^{-1}\m A_{ad},\;\;\;  	{\m B}_w = \m B_{wd} - \m A_{da}\m A_{aa}^{-1}\m B_{wa}\\
	{\m B_u} &= \m B_{ud} - \m A_{da}\m A_{aa}^{-1}\m B_{ua}.
\end{align*}

%With that in mind, setting up the state feedback control problem as commonly done in the control literature requires first designing the perturbed closed-loop system dynamics. For this purpose, during the dispatch time interval $kT \leq t < (k+1)T$, we consider the following control policy:

That being said, to set up the state feedback control problem in the control literature, one first needs to design the perturbed closed-loop system dynamics. For this purpose, for a dispatch time period $kT \leq t< (k+1)T$, let us consider a control policy given as:
\begin{align}\label{eq:control_policy}
	\boxed{\mr{\textbf{Control policy:}}\;\;\,{\m u}_{cl}(t) = \m u_{0}^k- \m K\left({\m x_d}(t) - \m x_d^k\right) }   
\end{align}
where  $\m x_d^k$ is the state equilibrium value before the occurrence of disturbances, $\m u_{{0}}^k$ is the set-point of the input $\m u$ which is determined for every $k^{th}$-dispatch time-period by running power flow (PF), and $\mK \in\mbb{R}^{n_u\times n_d} $ is a gain matrix (a design parameter). Then, the corresponding closed-loop system can be written as:
\begin{align}\label{eq:NDAE_cloosed_loop}
\m\dot{\m x}_d = {\m A}{\m x_d}\hspace{-0.0cm}  + {\m B_u} {{\m u_{cl}} } +\hspace{-0.0cm}  {\m f_d}\left({\m x_d},{\m u_{cl}},{\m w} \right) + {\m B}_w \m w
\end{align}

Now, let us assume there is an unknown disturbance in the power system. This disturbance will push the system to a new equilibrium state, let us denote that by $\m x_d'$. Then, the closed-loop system at this new equilibrium can be written as:
\begin{align}\label{eq:NDAE_x'}
	\m 0 &= \m A{\m x_d'}\hspace{-0.0cm} + \hspace{-0.0cm} \m B_u\m u_{cl}' \hspace{-0.0cm}+\hspace{-0.0cm}\m f_d\left(\m x_d',\m u_{cl}',\m w' \right)\hspace{-0.0cm}+\hspace{-0.0cm} \m B_w \m w'.
\end{align}
Then, the perturbed closed-loop dynamics can be expressed as follows (which is essentially computed  by subtracting \eqref{eq:NDAE_x'}  from \eqref{eq:NDAE_cloosed_loop}):
\begin{align}\label{eq:final_NDAE_peturbed}
\Delta\dot{\m x}_d\hspace{-0.0cm} &=\hspace{-0.0cm} (\m A\hspace{-0.0cm}-\hspace{-0.0cm}\m B_u\m K)\Delta\m x_d\hspace{-0.0cm}+\hspace{-0.0cm}\Delta\m f_d(\Delta \m x_d, \Delta\m u_{cl}, \Delta \m w)\hspace{-0.0cm}+\hspace{-0.0cm}\m B_w\Delta\m w \hspace{-0.0cm}\hspace{-0.0cm}
\end{align}
with $\Delta \m x_d = \m x_d-\m x_d'$, $\Delta \m w = \m w-\m w'$, $\Delta\m f_d(\Delta \m x_d, \Delta\m u_{cl}, \Delta \m w) = \m f_d(\m x_d,\m u_{cl},\m w)-\m f_d(\m x_d',\m u_{cl}',\m w')$. For simplicity, from now onward, with little abuse of notation, we drop the $\Delta$ notation from $\Delta \m x_d$,  $\Delta \m f_d$, and $\Delta \m w$ and simply use $\m f_d$, $\m x_d$, and $\m w$ instead, respectively. Thus the perturbed closed-loop system \eqref{eq:final_NDAE_peturbed} is  rewritten as follows:
\begin{align}\label{eq:final_NDAE_peturbed_final}
\dot{\m x}_d\hspace{-0.0cm} &=\hspace{-0.0cm} (\m A\hspace{-0.0cm}-\hspace{-0.0cm}\m B_u\m K)\m x_d\hspace{-0.0cm}+\hspace{-0.0cm}\m f_d( \m x_d, \m u_{cl},  \m w) \hspace{-0.0cm} + \hspace{-0.0cm}\m B_w\m w.\hspace{-0.0cm}
\end{align}

That being said, in state feedback control literature the primary goal is to design a control law (or control gain matrix $\m K$) with two main objectives. Firstly, the control law should converge the perturbed closed-loop system asymptotically to zero. This implies that the control policy \eqref{eq:control_policy} endeavors to restore the power system \eqref{eq:final_NDAE} to its steady-state equilibrium following a significant disturbance. Secondly, the designed control law should achieve this objective with the minimum effort required. This ensures that the control strategy minimizes the resources and inputs required to achieve the desired stabilization. We want to emphasize here that, in the end, the control policy $\m u_{cl}$ needs to be applied to the complete power system model \eqref{eq:final_NDAE}, then, it can simply be mapped back by computing $\hat{\m K} = \bmat{\m K & \mO} \in\mbb{R}^{n_u\times n}$  and redesigning the control policy as:  $\hat{\m u}_{cl} = \m u_{0}^k - \hat{\m K} \left({\m x}(t) - \m x^k\right)$. Now, since the dimensions are same, we can use  $\hat{\m u}_{cl}$ as a control policy for the complete power system NDAE.

Before introducing the state feedback control problem, we outline a key assumption used throughout the paper:
\begin{asmp}\label{asmp:regular}
	The pair $(\mA, \m B)$ is impulse controllable and finite dynamics stabilizable.
\end{asmp}
\noindent Assumption \ref{asmp:regular} is standard in control theory literature \cite{TakabaCDC,cobb1984controllability}, and several power system models have been shown to satisfy controllability and stabilizability, as demonstrated in \cite{AranyaICSM2019,Nugroho_ITCST2023}.

The overall infinite-horizon optimal state feedback control problem can then be formulated as:
\begin{align}\label{eq:initial_OP}
	\begin{split}
		\min_{{\m K} \in \mathbb{R}^{n_u \times n_d}} & \hspace{-0.0cm} \int_0^{\infty} ({\m C} \m x_d \hspace{-0.0cm}+\hspace{-0.0cm} {\m D} \m K\m x_d)^\top ({\m C} \m x_d \hspace{-0.0cm}+\hspace{-0.0cm} {\m D} \m K\m x_d)dt \\ \subjectto  & \;\;\;\mr{Dynamics}\; \;\eqref{eq:final_NDAE_peturbed_final}\;
	\end{split}
\end{align}
\noindent where $\mC \in \mathbb{R}^{n_d \times n_d}$ and $\mathbf{D} \in \mathbb{R}^{n_d \times n_u}$ are fixed penalty matrices, analogous to the weight matrices $\mQ$ and $\mR$ in traditional LQR control. The selection of $\mC$ and $\mD$ is driven by the grid operator's preferences, which determine the specific states or control inputs that should be penalized during the design of the controller gain matrix $\mK$.

To solve the above feedback control problem for the test system considered, we employ two distinct approaches. The first approach is a fully data-driven, RL-based technique, while the second is a model-based framework that leverages Lyapunov stability theory. Both approaches are discussed in detail in the following sections.
 \vspace{-0.3cm}
\section{Model-free optimal state feedback controller design}\label{sec:RL WAC}
\vspace{-0.03cm}
Here, we propose a model-free approach to solve the optimal state feedback control problem given in Eq. \eqref{eq:initial_OP}. The presented approach is based on deep reinforcement learning in which the agent learns optimal control policy (which essentially means learning the feedback controller gain matrix $\mK$) such that it maximizes cumulative rewards over time (the negative of quadratic cost function given in Eq. \eqref{eq:initial_OP}) by interacting with the environment (the complete NDAE power system model). 
That being said, since  \eqref{eq:initial_OP} is a \textit{continuous} action space infinite horizon problem, then deep-deterministic policy gradient based RL algorithms are well suited for such optimization problem \cite{DDPG}. An overview of DDPG-based RL is given below followed by the proposed methodology which properly shapes (using knowledge of state feedback control theory) the DDPG-based technique to efficiently solve the formulated state feedback control problem in \eqref{eq:initial_OP}.
\vspace{-0.2cm}
\subsection{DDPG for state feedback control in power systems}
\vspace{-0.0cm}
DDPG employs an actor-critic architecture, where the actor learns a policy for selecting actions, and the critic evaluates the value of those actions. Both the actor and critic functions are approximated using neural networks, let us denote them by $\m \pi_{\m \theta}(\m x_d):\mathbb{R}^{n_d}\rightarrow \mathbb{R}^{n_{u}}$ and $\m Q_{\m \phi}(\m x_d,\m u_\pi) :\mathbb{R}^{n_d}\times \mathbb{R}^{n_u}\rightarrow \mathbb{R}$, respectively, where $\m u_\pi$ is the output of the actor-network. The actor neural network is parameterized by $\m \theta$ and it takes states as inputs and returns action as output while for the critic network weights/parameters are denoted by  $\m \phi$ and it takes the pair states-actions as input and returns the Q-value (the cumulative long term reward for the state-action pair) as output.

That being said, generally speaking, the main idea in DDPG is to run gradient descent on the actor-network parameter $\m \theta$ as: $\m \theta \leftarrow \m \theta - \epsilon \nabla_{\m\theta}\m J(\m \theta)$, where $\epsilon$ represents the step size and  $\nabla_{\m\theta}\m J(\m \theta)$ denote the policy gradient which in DDPG is approximated as follows \cite{DDPG}:
\begin{align}\label{eq:actor_gardient}
\nabla_{\m\theta}\mJ(\m\theta) \approx \dfrac{1}{|\mathcal{B}|}\sum_{i \in \mathcal{B}}\nabla_u {\m Q}_\phi(.)\bigr|_{\m x_d[i], \m \pi_{\m \theta}(\m x_d[i])}\nabla_\theta \m \pi_{\m \theta}(\m x_d[i])
\end{align}
where $\m x_d[i], \m u[i]$ with $ i \in \mathcal{B}$ are set of samples extracted from the replay buffer which stores the history of observations  $(\m x_d, \m u, \m r, \m x_d^n)$, here, $\m r$ is the reward and  $\m x_d^n$ is the new state after taking action $\m u$. In \eqref{eq:actor_gardient} the critic network $\m Q_{\m \phi}(\m x_d,\m u)$ is learned via temporal difference learning given as \cite{DDPG}:
\begin{align}\label{eq:temporal_learn}
\min _\phi L(\m \phi) = \mathbb{E}[\m Q_{\m \phi}(\m x_d,\m u)\hspace{-0.0cm}-\hspace{-0.0cm}(\m r \hspace{-0.0cm}+ \hspace{-0.0cm}\m\gamma \m Q_{\m \phi_t}(\m x_d^n,  \m \pi_{\m \theta_t}(\m x_d^n)))^2]
\end{align}
%With that in mind,  the critic network parameters $\m \phi$ are essentially learned using backpropagation via minimizing the loss given above in  \eqref{eq:temporal_learn}.
where $\mathbb{E}$ denotes the expectation notation. In \eqref{eq:temporal_learn}  $\m \pi_{\m \theta_t}$, $\m Q_{\m \phi_t}$ represent the target actor and critic networks whose weights $\m \theta_t$,  $\m \phi_t$ are computed using  $\m \theta$,  $\m \phi$  via Polyak averaging as follows:
\begin{subequations}\label{eq:polyk_avg}
	\begin{align}
		\m \phi_t \leftarrow \rho \m \phi_{t}+(1-\rho) \m\phi\\
		\m \theta_t \leftarrow \rho \m \theta_{t}+(1-\rho) \m\theta
	\end{align}
\end{subequations}
where $\rho$ is a positive constant usually selected close to 1.  The target networks are essentially copies of the original networks which trail behind the original networks and are updated slowly using the original network parameters (in DDPG this is usually done to improve the training stability). Further detailed explanations about DDPG can be seen in \cite{DDPG}. 

\textcolor{black}{ With that in mind, to efficiently solve the state feedback control problem \eqref{eq:initial_OP} using the DDPG-based RL technique, we propose to do the following: Firstly, notice that, the actor neural network tries to approximate the feedback controller gain matrix $\mK$. For the perturbed closed-loop system, the control law implemented by the actor is essentially linear mapping $\m u_\pi= \m K\m x_d = k_1x_{d_1} + k_2x_{d_2}+...$ where $k_1$, $k_2+...$   are the weights of the actor neural network. Then, a shallow neural network for the actor has been designed with an input layer and a fully connected layer to provide a linear mapping between states to actions. Notice that to reduce the number of learnable parameters (and thus increasing the convergence rate) no \texttt{relu} layer is added as no nonlinear mapping is needed to train the actor.  Also, it is well known from the control theory that the closed-loop system is stable if these gains are negative, therefore, initializing them to take negative values can speed up the convergence. Moreover,  since the learned control law approximated by the actor neural network does not have any extra biasing (constant coefficients such as $\m u_\pi= \m K\m x_d + \m b$), then while training, the actor neural network weights are only updated and the biasing learn rate is set to be zero.} 

\textcolor{black}{Now, as discussed earlier, the critic learns the Q-value function. The critic accepts an observation-action pair as inputs and returns a scalar (the discounted long-term reward) as output.  To that end, from feedback control theory, we know that the long-term reward function (or the value function) for \eqref{eq:final_NDAE_peturbed_final} is known to be quadratic (the Lyapunov function $\m x_d^\top\m P\m x_d$ with $ \m P \succ 0$, which tells us the optimal cost-to-go). Thus, the critic network is designed to have a quadratic layer (which returns a vector of quadratic monomials) followed by a fully connected layer (providing a linear mapping of its inputs)}. Furthermore, in the context of our setting, we can essentially express the structure of the Q-value function as follows:
\begin{align}\label{eq:critc}
\m Q(\m x_d, \m u_\pi) \hspace{-0.0cm}= \hspace{-0.0cm}l_1x_{d_1}^2\hspace{-0.0cm}+\hspace{-0.0cm} l_2x_{d_1}x_{d_2}\hspace{-0.0cm}+\hspace{-0cm} l_3x_{d_2}^2\cdots l_{n_d}x_{d_1}u_1+\cdots
\end{align}
where $l_1, l_2, \cdots$ are the weights of the critic neural network, or alternatively Eq. \eqref{eq:critc} in matrix form can be written as:
\begin{align}\label{eq:critic2}
	\begin{split}
\m Q(\m x_d, \m u_\pi)\hspace{-0.0cm}  &=\hspace{-0.0cm} \bmat{x_{d_1}\;x_{d_2}\cdots u_1 \cdots}\hspace{-0cm}\bmat{l_1& \frac{l_2}{2}& \frac{l_4}{2}\cdots\\ \frac{l_2}{2}& l_3& \frac{l_5}{2}\cdots\\ \frac{l_4}{2}& \frac{l_5}{2}& l_6\cdots\\ \vdots &\vdots &\hspace{-0cm}\vdots }\hspace{-0cm}\bmat{x_{d_1}\\x_{d_2}\\ \vdots \\ u_1 \\ \vdots}
		\\
		\Longleftrightarrow & \bmat{\m x_d^\top& \m u_\pi^\top}\m L \bmat{\m x_d^\top& \m u_\pi^\top}^\top
	\end{split}
\end{align}
Since $\m u_\pi = \m K\m x_d$, then, one can rewrite \eqref{eq:critic2} as:
\begin{align}
\m Q(\m x_d, \m K\m x_d)\hspace{-0.0cm} &=\hspace{-0.0cm} \m x_d^\top\bmat{\mI\;\;\m K^\top}\m L \bmat{\mI\;\;\m K^\top}^\top\hspace{-0.0cm}\m x_d\hspace{-0.0cm} = \hspace{-0.0cm}\m x_d^\top\m P\m x_d
\end{align}
Now, in our case, since we are maximizing the negative of the quadratic cost function, both $\m P$ and $\m L$ should be negative definite. Thus, initializing the critic network to be a negative definite matrix can overall stabilize the learning and speed up the convergence.

Moving on, in DDPG for every episode, the actor generates a random action, applies it to the environment, and tries to maximize the cumulative long-term reward. However, for the power system model, any random unbounded action cannot be chosen as it will destabilize the system and the episode might not even start. Thus, for each episode, the random actions are selected in a bounded region with $\m u_{max}$ and $\m u_{min}$ as upper and lower bounds, respectively. For the considered test system the control actions are voltage and power/valve position set points as discussed in Sec. \ref{sec: System model}. Then, voltage setpoints are constrained between $\pm5\%$ while the power/valve position setpoints are selected to be between  $\pm10\%$. This is reasonable as in normal operation the voltage should be between $0.95-1.05$pu and also the power output from each power generator usually changes proportionally  in response to large disturbances \cite{AranyaICSM2019}. Furthermore, since the output of the actor-network is bounded between $\m u_{max}$ and $\m u_{min}$, then during training  to avoid the actor output being saturated frequently a \texttt{tanh} (which scales the output between $-1$ and $1$) and scaling layer (which scales back the output to their desired range) have been added. That being said, the overall proposed model-free approach to solving the state feedback control problem is given in Algorithm \ref{alg:Algorithm 1}.
\begin{algorithm}[h]\label{alg:Algorithm 1}
	\caption{\text{Model-free state feedback controller}}\label{alg:model-free RL}
	\DontPrintSemicolon 
Design actor $\m \pi_{\m \theta}$  and critic $\mQ_\phi$ networks and initialize them as discussed in Sec \ref{sec:RL WAC}.\;	
Initialize $\rho$ and target actor $\m \pi_{\m \theta_t}$ and critic $\mQ_{\phi_t}$ networks.\; 
Initialize $\m C$,  $\m D$ matrices and control input bounds $\m u_{max}$, $\m u_{min}$.\;
Initialize replay buffer $\mathcal{R}$.\;
\For{{{episode} = 1 to M}}{
Initialize power system model with a random initial condition $\m x_0$ having $5\%$ maximum deviation from steady-state values.\;
\For{{{iteration} = 1 to J}}{
	Select input $\m u_\pi$ from the actor network.\;
	Execute $\m u_\pi$ on system \eqref{eq:final_NDAE_peturbed_final} using $\m x_0$ and observe quadratic cost $\m r$ and next state $\m x_d^n$.\;
	Store observation in $(\m x_d, \m u_\pi, \m r, \m x_d^n)$ in  $\mathcal{R}$.\;
	Sample mini-batch $\mathcal{B}$ of observation from $\mathcal{R}$.\;
	\For{observation $ b \in \mathcal{B}$ }{ 
	Compute: $\m y_b = \m r(\m x_d, \m u)+ \m\gamma \m Q_{\m \phi_t}(\m x_d^n,  \m \pi_{\m \theta_t}(\m x_d^n))$.\;}
	Update the critic network by minimizing the loss as given in \eqref{eq:temporal_learn}.\;
	Update the actor network via gradient descent and with policy gradient  $\nabla\m J(\m \theta)$  computed using Eq.  \eqref{eq:actor_gardient}.\;
	Update target actor and crtic network through polyak averaging using Eq. \eqref{eq:polyk_avg}.\;
}}
Return trained actor $\m \pi_{\m \theta}$  and critic $\mQ_\phi$ networks.\;
Extract controller gain $\m K$ from $\m \pi_{\m \theta}$.\;
Design $\hat{\m K}$, plug it in the control policy $\hat{\m u}_{cl}$, and apply it to the complete NDAE power system \eqref{eq:final_NDAE} with $\m x^k$ and $\m u_0^k$ being updated every OPF/PF dispatch time period $kT \leq t< (k+1)T$. 
\end{algorithm}
\vspace{-0.3cm}
\section{Model-based optimal state feedback controller design}\label{sec:Model-based WAC}
\vspace{-0.1cm}
To solve the formulated optimal state feedback control problem \eqref{eq:initial_OP}, here, we develop a completely model-based approach (without relying on any data or interaction with the power system environment). The proposed technique in this section is based on the Lyapunov stability theory and it consider information of the constant system matrices $(\mA, \mB_u, \mB_w)$ along with approximation on nonlinear function ${\m f_d}\left({\m x_d},{\m u},{\m w} \right)$ to compute the controller gain matrix $\mK$.

That being said, to solve the optimal feedback control problem \eqref{eq:initial_OP}, we present the following results:
\begin{myprs}\label{theorm:H_inf}
	Suppose Assumption \ref{asmp:regular} hold. Then, there exists a solution to problem \eqref{eq:initial_OP} (meaning the perturbed closed-loop dynamics asymptotically converges to zero with minimum control effort required), if there exist matrices $\m Z \in \mbb{S}_{++}^{n_d \times n_d}$, $\m W \in \mbb{R}^{n_d \times n_d}$, and $\m F \in \mbb{R}^{n_u \times n_d}$ that are solution to the following semi-definite program (SDP)  
	\begin{align*}
 		\mathbf{\left( SDP\right) }\;\;\;\;\; \min_{\m F, \m Z,\m W} \;\;\;  Tr (\mW)\\ \subjectto  & \;\;\;\mr{LMI}\; \eqref{eq:LMI},\; \; \m Z \succ \mO \bmat{\mZ & *\\ \hat{\m B}_w^\top &\mW}\succeq 0
	\end{align*} 
	where $\mr{LMI}$ \eqref{eq:LMI} is as follows:
	\begin{align}\label{eq:LMI}
\bmat{\mZ^\top\mA^\top \hspace{-0.0cm}+\hspace{-0.0cm} \mA\mZ\hspace{-0.0cm} -\hspace{-0.0cm} \mF^\top\mB_u^\top\hspace{-0.0cm}-\hspace{-0.0cm}\mB_u\mF & *  & * \\  \hat{\m B}_w^\top& \mO&*\\ \mC\m Z + \mD\mF& \mO&-\mI}\prec \mO
	\end{align}
Upon solving the above SDP the controller policy can be computed as $\m K = \mF\mZ^{-1}$.
\end{myprs}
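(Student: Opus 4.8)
The plan is to convert the optimal feedback problem \eqref{eq:initial_OP} into a Lyapunov-based $\mathcal{H}_2$ synthesis problem and then apply the standard congruence-transformation (change-of-variables) trick that linearizes the resulting bilinear matrix inequality. First I would observe that, since the disturbance $\m w$ is treated as an exogenous square-integrable signal and the penalized output is $\m z = \m C\m x_d + \m D\m K\m x_d = (\m C + \m D\m K)\m x_d$, problem \eqref{eq:initial_OP} is exactly the minimization of the $\mathcal{H}_2$ norm of the transfer operator from $\m w$ to $\m z$ for the closed-loop system $\dot{\m x}_d = (\m A - \m B_u\m K)\m x_d + \m B_w\m w$. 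Along the way the nonlinearity $\m f_d$ must be absorbed: I would invoke the approximation step alluded to in the section preamble (bounding or neglecting $\m f_d$ near the equilibrium, consistent with Assumption~\ref{asmp:regular}), so that the synthesis is carried out on the linear part, with $\hat{\m B}_w$ denoting the effective disturbance map. This is the step I expect to be the main obstacle: making the passage from the genuinely nonlinear \eqref{eq:final_NDAE_peturbed_final} to a linear $\mathcal{H}_2$ model rigorous (e.g.\ via a sector/Lipschitz bound on $\m f_d$ that can be folded into the LMI, or an explicit small-signal argument), rather than merely heuristic.

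Next I would write down the classical $\mathcal{H}_2$ characterization: the closed loop is internally stable and has $\mathcal{H}_2$ cost below $\gamma$ iff there exist $\m P \succ \mO$ and $\m W$ with
\[
(\m A - \m B_u\m K)^\top \m P + \m P(\m A - \m B_u\m K) + (\m C + \m D\m K)^\top(\m C + \m D\m K) \prec \mO,
\quad
\bmat{\m P & \hat{\m B}_w \\ \hat{\m B}_w^\top & \m W} \succeq 0,
\quad
\trace(\m W) < \gamma .
\]
Internal stability follows because the first inequality exhibits $\m x_d^\top \m P \m x_d$ as a strict Lyapunov function for $\m A - \m B_u\m K$, which gives the asymptotic convergence required in the statement; the trace bound on $\m W$ upper-bounds the $\mathcal{H}_2$ performance, i.e.\ the control-effort-weighted cost, which gives the "minimum effort" part. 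Minimizing $\trace(\m W)$ over the feasible set then solves \eqref{eq:initial_OP} up to the linearization.

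The final step is the linearizing substitution. I would set $\m Z = \m P^{-1}$ and $\m F = \m K\m Z$, pre- and post-multiply the first (Lyapunov) inequality by $\m Z = \m Z^\top$, and use $\m Z^\top(\m A - \m B_u\m K)^\top = \m Z\m A^\top - \m F^\top\m B_u^\top$ to obtain $\m Z\m A^\top + \m A\m Z - \m F^\top\m B_u^\top - \m B_u\m F + (\m C\m Z + \m D\m F)^\top(\m C\m Z + \m D\m F) \prec \mO$. A Schur complement on the quadratic term $(\m C\m Z + \m D\m F)^\top(\m C\m Z + \m D\m F)$ produces precisely the $3\times 3$ block LMI \eqref{eq:LMI} (the middle $\mO$ block being the bookkeeping row/column that carries $\hat{\m B}_w$); another Schur complement, together with $\m Z = \m P^{-1} \succ \mO$, turns the $\mathcal{H}_2$ norm inequality $\bmat{\m P & * \\ \hat{\m B}_w^\top & \m W} \succeq 0$ into $\bmat{\m Z & * \\ \hat{\m B}_w^\top & \m W} \succeq 0$. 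Feasibility of this SDP is then equivalent to feasibility of the original $\mathcal{H}_2$ conditions, and the recovered gain is $\m K = \m F\m Z^{-1}$, as claimed. I would close by noting that Assumption~\ref{asmp:regular} (impulse controllability and finite-dynamics stabilizability of $(\m A, \m B_u)$ after the index-1 reduction) is exactly what guarantees the feasible set is nonempty, so the $\min$ is attained.
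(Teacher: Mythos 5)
Your proposal follows essentially the same route as the paper's proof: the nonlinearity is folded into the disturbance channel (the paper does exactly what you anticipate, simply positing $\Delta\m f_d = \m B_{fd}\m w_{fd}$ with $\m B_{fd}=\m B_w$ and $\m w_{fd}$ being $\mathcal{L}_2$-bounded, so the step you flag as the main obstacle is indeed assumed rather than proved), a quadratic Lyapunov function yields the dissipation inequality $\dot V(\m x_d)+\m x_d^\top\m\Gamma^\top\m\Gamma\m x_d<0$ (equivalent to the classical $\mathcal{H}_2$ characterization you invoke), and the substitutions $\m Z=\m P^{-1}$, $\m F=\m K\m Z$ with a congruence transformation and Schur complements produce LMI \eqref{eq:LMI}, the trace objective, and $\m K=\m F\m Z^{-1}$, exactly as in your derivation.

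One imprecision is worth flagging: your claim that $\bmat{\m P & *\\ \hat{\m B}_w^\top & \m W}\succeq 0$ ``turns into'' $\bmat{\m Z & *\\ \hat{\m B}_w^\top & \m W}\succeq 0$ under $\m Z=\m P^{-1}$ is not correct as stated, since the former encodes $\m W\succeq \hat{\m B}_w^\top\m P^{-1}\hat{\m B}_w$ while the latter encodes $\m W\succeq \hat{\m B}_w^\top\m Z^{-1}\hat{\m B}_w=\hat{\m B}_w^\top\m P\hat{\m B}_w$; the correct pairing with the observability-type Lyapunov inequality is the bound $\trace(\hat{\m B}_w^\top\m P\hat{\m B}_w)$, which Schur-complements directly (using $\m Z\succ\mO$) into the paper's second LMI. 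The paper's own narrative is loose at the very same spot (it speaks of minimizing $\trace(\hat{\m B}_w^\top\m Z\hat{\m B}_w)$ yet states the SDP in the other form), and likewise both you and the paper gloss over the fact that a block matrix with a zero $(2,2)$ diagonal block cannot be strictly negative definite, so the $\hat{\m B}_w$ row in \eqref{eq:LMI} should really be carried with a non-strict inequality or dropped from the strict LMI. Neither issue amounts to a different approach or a fatal gap relative to the paper; your argument matches the paper's in substance.
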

\begin{proof}
Before presenting the proof, to formulate a tractable convex LMI-based formulation for the model-based controller design, we assume that the perturbation in the nonlinearity in Eq. \eqref{eq:final_NDAE_peturbed_final} is $\mathcal{L}_2$-norm bounded and can be expressed as $\Delta\m f_d(\m x_d,\m u_{cl},\m w) = \mB_{fd}\m w_{fd}$ with $\m B_{fd} = \m B_w$. We want to emphasize here that, the above assumption on $\Delta\m f_d(\m x_d,\m u_{cl},\m w)$ is only carried out to design tractable convex SDP formulation for the controller design. At the end, the final designed feedback controller is applied to the complete NDAE power network without any simplifications. Now, we define:
	\begin{align}\label{eq:B_w tilde}
		\hat{\m w}\hspace{-0.0cm} =\hspace{-0.0cm} \bmat{\m w^\top & \m w_{fd}^\top}^\top,~ \hat{\mB}_w \hspace{-0.0cm}=\hspace{-0.0cm} \bmat{\mB_w & \m B_{fd}}.
	\end{align}
With these definitions in place, to begin the proof of Proposition \ref{theorm:H_inf},  we consider a candidate Lyapunov function for the perturbed system \eqref{eq:final_NDAE_peturbed_final} as $V(\m x_d)=\m x_d^\top \mP \m x_d $ where  $V:\mbb{R}^{n_d}\rightarrow \mbb{R}_+$ and $\m P\in\mbb{R}^{n_d\times n_d}$. Then, the derivative of $V(\m x_d)$ with respect to $\m x_d$ along the trajectories of \eqref{eq:final_NDAE_peturbed_final} can be expressed as
	\begin{align*}
		{\dot V}(\m x_d) &= (\mP \m x_d)^\top\dot{\m x_d}+ \dot{\m x_d}^\top (\mP \m x_d).
	\end{align*}
	Now, for the overall system stability including the objective function (as given in \eqref{eq:initial_OP}), we need ${\dot V}(\m x_d) + \m x_d^\top \m\Gamma^\top \m \Gamma\m x_d < 0$ where $\m \Gamma = \m C - \m D\mK$, which can expanded as:
	\begin{align*}
		 {(\m P\m x_d)}^\top(\m{A}_{C}\m x_d+ \hat{\m B}_w \hat{\m w})  + \m x_d^\top \m\Gamma^\top \m \Gamma\m x_d + ( \m{A}_{C}\m x_d + \hat{\m B}_w \hat{\m w})^\top\m{P}\m x_d< 0
	\end{align*}
where $ \m{A}_{C} = \m A-\mB_u\mK$. We can rewrite the above equation also as $\m\Psi^\top\m\Xi\m\Psi<0$ with:
	\begin{align}\label{eq:proof_2}
		 \m\Psi\hspace{-0.01cm} = \hspace{-0.01cm}\bmat{\m x \\ \hat{\m w}}^\top, \m\Xi\hspace{-0cm}=\hspace{-0cm}\hspace{-0cm}\bmat{ \mA^\top_{C}\mP\hspace{-0.01cm}+\hspace{-0.01cm}\mP^\top \mA_{C} \hspace{-0.01cm}+\hspace{-0.01cm}  \m\Gamma^\top \m \Gamma & \mP^\top\hat{\m B}_w \\ \hat{\m B}_w^\top\mP & \mO}
	\end{align}
Notice, $\m\Psi^\top\m\Xi\m\Psi<0$ holds if and only if $\m\Xi\preceq \mO$.

Now, let us define  $\mZ:=\mP^{-1}$ and pre-multiply and post-multiply \eqref{eq:proof_2} with   $\mr{diag}\left( \bmat{\mZ^\top&\mI}\right)$ and $\mr{diag}\left( \bmat{\mZ&\mI}\right)$, respectively, to get the following equivalent representation of \eqref{eq:proof_2}:
\begin{align*}
	\hspace{-0.0cm}\bmat{\mZ^\top\mA^\top_{C}+ \mA_{C}\mZ+\mZ^\top\m\Gamma^\top \m \Gamma\m Z\hspace{-0.01cm} & \hspace{-0.0cm}\hat{\m B}_w  \\  \hat{\m B}_w^\top& \mO}\prec \mO
\end{align*}
Then, applying the Schur complement lemma \cite{zhang2006schur}, the above LMI can equivalently be represented as:
\begin{align}\label{eq:prf_cong1}
	\hspace{-0.0cm}\bmat{\mZ^\top\mA^\top_{C}+ \mA_{C}\mZ & \hat{\m B}_w  & \mZ^\top\m\Gamma^\top\\  \hat{\m B}_w^\top& \mO&\mO\\ \m\Gamma\m Z& \mO&-\mI}\prec \mO
\end{align}
% $\m Z = \m{XE}^\top+\m E^{\perp}{\mY}$ for some $\m X \in \mbb{S}_{++}^{n \times n}$ and $\m Y \in \mbb{R}^{n_a \times n}$ where $\m E^{\perp}$ representing the orthogonal compliment of $\mE$  as detailed in \cite{nadeem2022dynamic,Xu}. Finally,
Now, defining $\mF := \m K\m Z$, then we get the convex LMI \eqref{eq:LMI} which represents the necessary and sufficient stability conditions for the existence of the controller policy. Finally, to minimize the impact of $\m w$ on the system dynamics according to $\mathcal{H}_2$ notion, we need to minimize $Tr(\hat{\m B}_w^\top \mZ \hat{\m B}_w)$ where $Tr$ denotes the Trace notation. Now, let us upperbound $\hat{\m B}_w^\top\m Z \hat{\m B}_w$ it by matrix variable $\m W$ as $\hat{\m B}_w^\top \mZ \hat{\m B}_w \preceq \m W$. Then, by minimizing  $Tr(\m W)$ and taking Schur compliment of $\hat{\m B}_w^\top\mZ \hat{\m B}_w < \m W$ we get the second LMI along with the objective function in the formulated SDP. This completes the proof. 
\end{proof}
 Proposition \ref{theorm:H_inf} presents a completely model-based approach to solve the formulated optimal state feedback control problem  \eqref{eq:initial_OP}. The computed gain matrix $\mK$ guarantees the asymptotic stability of the perturbed system \eqref{eq:final_NDAE_peturbed_final}. In other words, it ensures that after a large disturbance, the NDAE system model \eqref{eq:NDAE_cloosed_loop} converges back to its steady values (while adding damping to the system oscillation) with minimum control effort required. 
%The overall proposed model-based approach to solving state feedback control problem \eqref{eq:initial_OP} is given in Algorithm \eqref{alg:Alg 2}.

We want to emphasize here that, in the end, both proposed approaches provide us the feedback gain matrix $\m K$ via solving exactly the same state feedback control problem given in \eqref{eq:initial_OP}. In the model-free approach, the controller tries to learn $\mK$ by continuously interacting with the power system model and observing the quadratic cost/reward given in \eqref{eq:initial_OP} while in the model-base design, it requires an accurate system model to compute $\mK$. As renewable energy integration and the widespread adoption of distributed energy resources continue to evolve, the accurate modeling of power systems becomes increasingly challenging. Thus, the model-free approach has a major advantage over model-based design as it does not require any system information and modeling.

In the following section, we implement both presented approaches and do a thorough comparison on a renewables heavy power system model with detailed synchronous generator, wind, solar, and composite loads (constant power, constant impedance, and motor-based loads) dynamics as discussed in Sec. \ref{sec: System model}. %Also, numerous fundamental questions concerning the computational complexity of both types of proposed approaches and their ability to mitigate disturbances are posed and subsequently answered.

%This has been achieved using linear-matrix inequality based Lyapunov stability notion. That being said, to begin, let us consider a candidate Lyapunov 

%\begin{itemize}
%	\item just design LQR type controller for NDAE thats it
%	\item convert to NODE, say for NDAE we can solve the initial problem but can lead to scalability issues as discussed in {nadeem}. 
%	\item The lyponove based approaches usually have sensativity and scalablity issues. Also, lyp approach is not capable of sparse structured controller, then other way to solve is using nonlyp approach as proposed in. Then motivated by [hinfstruct] we can solve the feedback control problem as follows:
%	\item Overview of lyponove stability and hinfinity notion
%	\item LMI based and hinfstruct based state feedback controllers
%\end{itemize}
\vspace{-0.4cm}
\section{Case studies}\label{sec:case studies}
\vspace{-0.2cm}
Both the proposed model-based and model-free controllers have been tested on  IEEE 9-bus and 39-bus systems. Both these power systems have been modified to include composite loads, and high penetration of wind, solar-based renewable resources. The 9-bus system includes one steam-based traditional power plant at Bus 1, one wind-based power generator at Bus 3, a solar farm at Bus 2, and a motor-based load at Bus 8. Similarly, the 39-bus system is comprised of four traditional power plants, four solar power plants, two wind-based power plants, and a motor-based load at Bus 14. The one-line diagrams of both power systems are presented in \ref{appndix:ninth Gen_dynamics}, Figs. \ref{fig:Diagram39}, \ref{fig:Diagram}. Further details about modeling and parameters of traditional power plants, wind/solar farms, and loads of the systems used in this work are given in \cite{sauer2017power,SoumyaITPWRS2022, Hart2016, WasynczukITPE1996}.

All the case studies are performed on a computer with Intel $i7$ processor and $64$GB RAM. The power systems have been modeled in MATLAB 2021a and are simulated using \texttt{ode15s} MATLAB DAE solver. The initial conditions for the power systems are computed using power flow studies via MATPOWER \cite{Matpower} \texttt{runpf} function. The baseline for the frequency and volt-ampere of the power network is set to be  $w_b = 120\pi\mr{rad/s}$, $S_b = 100\mr{MVA}$, respectively. 
Also, it's worth mentioning that in both the 9-bus and 39-bus systems the total shares of power production from renewables are around $63\%$ and $67\%$, respectively.

To design the actor and critic networks and to implement the proposed DDPG-based model-free feedback controller, MATLAB RL-toolbox is utilized. During learning for the 9-bus system the agent interacts with the power system model for a sample time of $T_s = 0.1s$ and a total of $200$ samples are taken in each episode while the total number of episodes is set to be $4000$. The rest of the hyper-parameters for both 9-bus and 39-bus systems used during the learning process are given in Tab. \ref{Tab: table para}. 
These hyper-parameters are essential and need to be properly tuned to have stable training and to get optimal control policy. Although there is no systematic way to tune these parameters, they can be determined based on human operator's knowledge or trials and errors. One can play with these parameters to get better performance.  In Tab. \ref{Tab: table para}, we set the critic learning rate to be larger as compared to the actor learning rate. This seems to work well in our case, the intuitive reasoning behind this can be that since the critic takes the action and observation and tells how good the taken action is, thus pushing the critic to learn quickly as compared to the actor seems to stabilize the overall learning.

In every episode, the initial conditions for the system are randomly chosen with $5\%$ maximum deviation from steady-state values. Notice that, while learning in every episode, if for any particular state-action pair the \texttt{ode15s} solver does not converge to a solution (meaning the power system is not simulated) then no learning is performed and the episode is terminated. Regarding training, the overall learning times for both test systems are given in Tab. \ref{tab:Table 1} while the cumulative long-term rewards (the negative of the quadratic cost function) for both test networks are presented in Fig. \ref{fig:pd rewards}. We can see that for the 9-bus test system the algorithm converges to optimal policy in around 4000 episodes while for the 39-bus test system, it took almost 5000 episodes to converge and learn the optimal control policy.
\begin{figure}
	\hspace{-0.3cm}\subfloat{\includegraphics[keepaspectratio=true,scale=0.53]{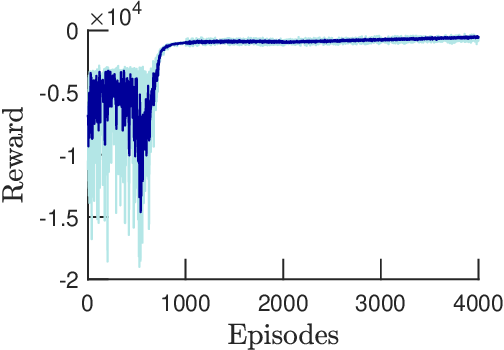}}{}\hspace{-0.1cm}
	\subfloat{\includegraphics[keepaspectratio=true,scale=0.53]{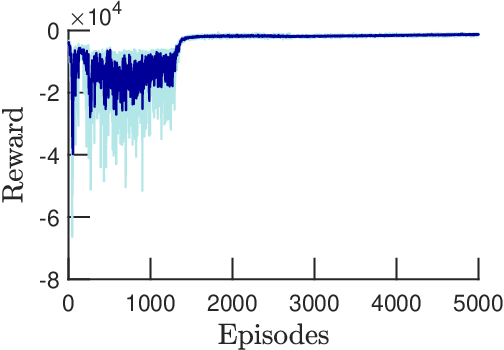}}{}{}\hspace{-0.25cm}
	{}{}\vspace{-0.3cm} \caption{Average and episodic rewards for 9-bus (left) and 39-bus (right) test systems.}\label{fig:pd rewards}
 \vspace{-0.65cm}
\end{figure}

On the other hand, the model-based controller is implemented in YALMIP \cite{Lofberg2004} with MOSEK \cite{Andersen2000} as an optimization solver to solve the proposed SDP given in Proposition  \ref{theorm:H_inf}. Notice that, in both types of designs, the proposed controller acts in realtime as a secondary control loop and is directly actuated via the primary controllers (using their voltage and power/valve-position set points).

%\textbf{\textit{The hyperparameters are highly essential and needs to be properly tuned to have better training and get optimal control policy. Although there is no deterministic proper way to tune these parameters and should be determined based on operator knowledge and trials and error. In this work we selected the paramters used in training are given in Tab 1. These parameters seem  A relatively large sample  time Ts = 0.5s (and have seem to work better then smaller ones) is chosen in each episode in order to see }}

\begin{figure}
	\subfloat{\includegraphics[keepaspectratio=true,scale=0.53]{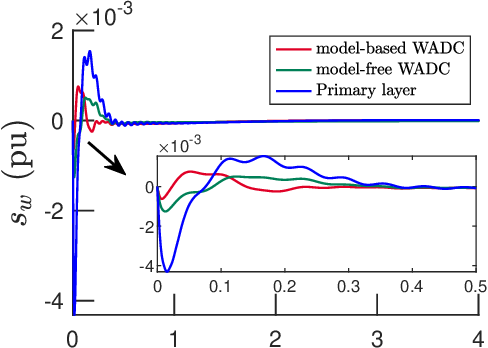}}{}\hspace{-0.05cm}
	\subfloat{\includegraphics[keepaspectratio=true,scale=0.53]{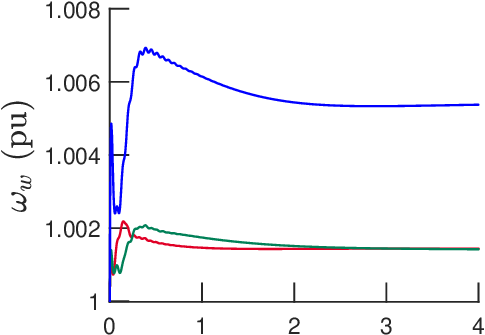}}{}{}\hspace{-0.25cm}
	\subfloat{\includegraphics[keepaspectratio=true,scale=0.53]{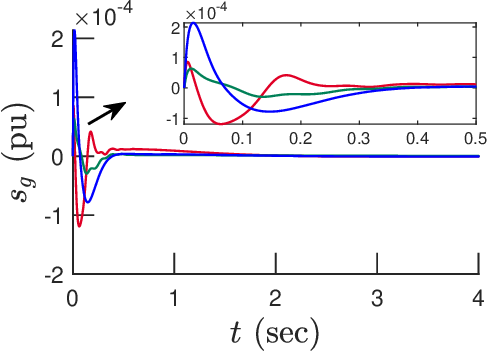}}{}{}\vspace{0.19cm}\hspace{-0.05cm}
	\subfloat{\includegraphics[keepaspectratio=true,scale=0.53]{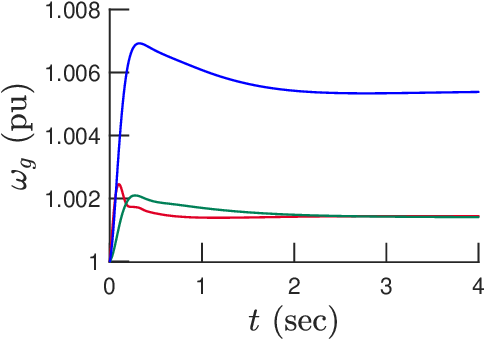
	}}{}{}\vspace{-0.5cm} \caption{Comparative analysis under $\Delta_L = -0.5$ for 9-bus system: relative slip and angular speed of wind power plant (above),  slip and rotor frequency of synchronous machine (below).}\label{fig:pd decrease case 9}
 \vspace{-0.65cm}
\end{figure}

\begin{figure}
\subfloat{\includegraphics[keepaspectratio=true,scale=0.53]{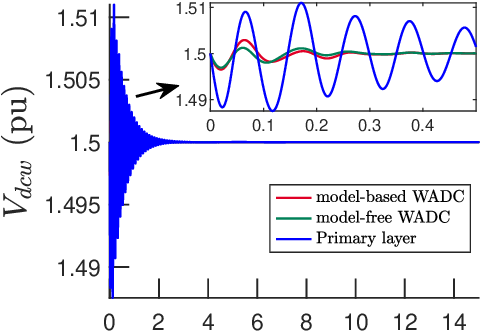}}{}\hspace{-0.05cm}
	\subfloat{\includegraphics[keepaspectratio=true,scale=0.53]{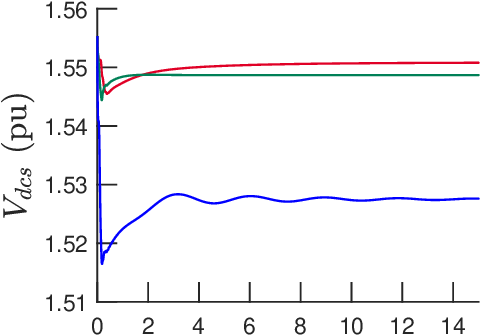}}{}{}\hspace{-0.25cm}
	\subfloat{\includegraphics[keepaspectratio=true,scale=0.53]{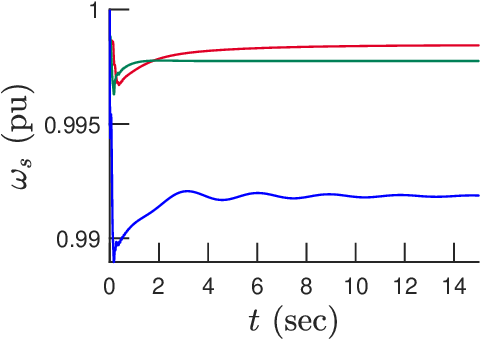}}{}{}\vspace{0.19cm}\hspace{-0.05cm}
	\subfloat{\includegraphics[keepaspectratio=true,scale=0.53]{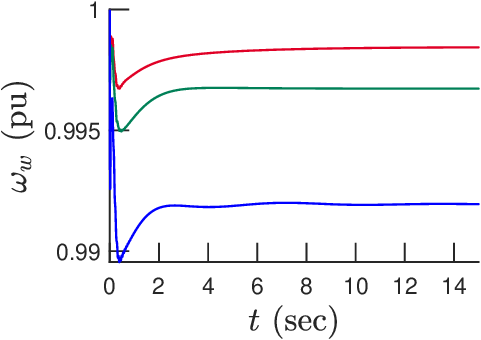
	}}{}{}\vspace{-0.5cm} \caption{Comparative analysis under $\Delta_L = 0.7$ for 9-bus system: DC link voltage of the wind and solar plant (above), relative speed of wind and solar plants (below).}\label{fig:pd inc case 9}
 \vspace{-0.5cm}
\end{figure}

\begin{figure}[h]
	\subfloat{\includegraphics[keepaspectratio=true,scale=0.53]{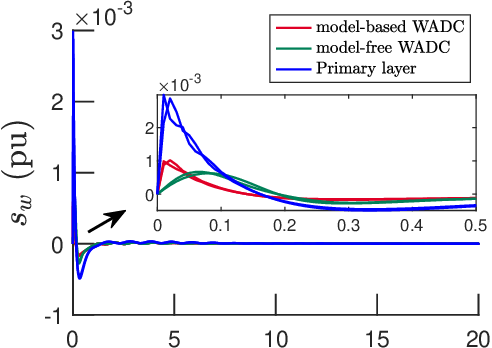}}{}\hspace{-0.05cm}
	\subfloat{\includegraphics[keepaspectratio=true,scale=0.53]{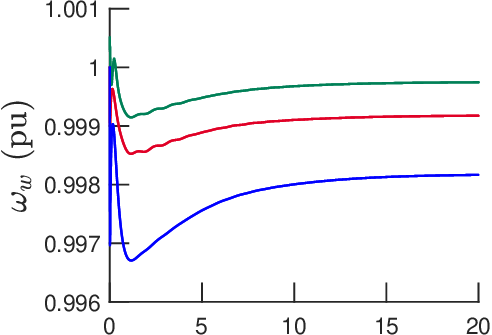}}{}{}
	
	\subfloat{\includegraphics[keepaspectratio=true,scale=0.53]{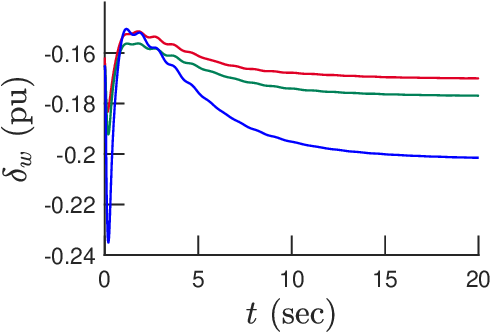}}{}{}\hspace{-0.05cm}
	\subfloat{\includegraphics[keepaspectratio=true,scale=0.53]{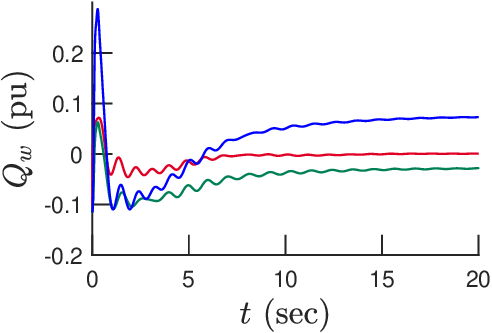
	}}{}{}\vspace{-0.69cm} \caption{Comparative analysis under $\Delta_L = 0.003$ for 39-bus test system: relative slip and angular speed of all wind power plants (above),  relative angle and reactive power output of wind plant connected at Bus 32 (below).}\label{fig:pd increase case 39}
 \vspace{-0.5cm}
\end{figure}

\begin{figure}
	\subfloat{\includegraphics[keepaspectratio=true,scale=0.53]{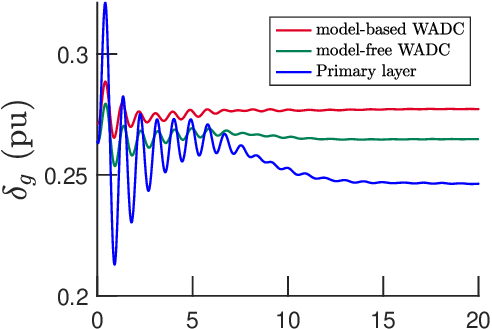}}{}\hspace{-0.05cm}
	\subfloat{\includegraphics[keepaspectratio=true,scale=0.53]{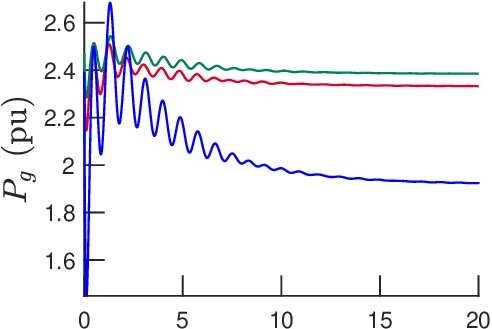}}{}{}
	
	\subfloat{\includegraphics[keepaspectratio=true,scale=0.53]{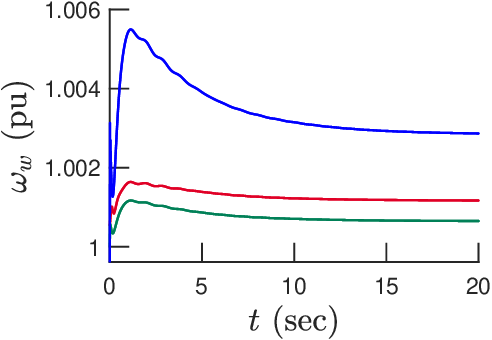}}{}{}\hspace{-0.05cm}
	\subfloat{\includegraphics[keepaspectratio=true,scale=0.53]{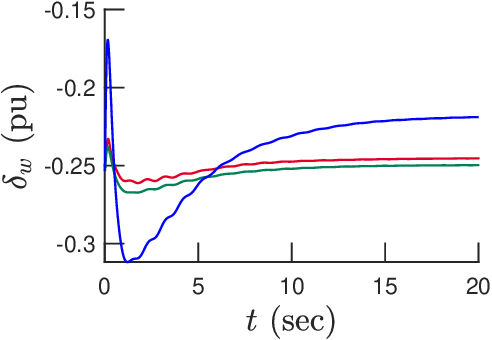}}{}{}\vspace{-0.69cm} \caption{Comparative analysis under $\Delta_L = -0.001$ and $\Delta_{I_s} = 0.1$ for 39-bus test system: rotor angle and real power output of Generator at Bus 38 (above),  relative speed of all wind power plants and relative angle of wind plant connected at Bus 34 (below).}\label{fig:pd dec case 39}
 \vspace{-0.5cm}
\end{figure}

\begin{table}[t!]
	\caption{Parameters used in model-free state feedback controller design.}\label{Tab: table para}
		\setlength{\tabcolsep}{1.5pt}
	\small
	\centering
	\begin{tabular}{|c|c|c|}
		\hline
		{Parameter} & {Value, 9-bus} & {Value, 39-bus} \\
		\hline
		\hline
		\hspace{-0.1cm}Sample time, $T_s$ & \hspace{-0.1cm} $0.1s$ & $0.3s$ \\
		\hline
		\hspace{-0.1cm} Replay buffer length, $\mathcal{R}$ & \hspace{-0.1cm} $10^6$ & $10^8$ \\
		\hline
		\hspace{-0.1cm} Mini batch size, $\mathcal{B}$ & \hspace{-0.1cm} $10^1$ & $10^2$\\
		\hline
		\hspace{-0.1cm} actor learning rate & \hspace{-0.1cm} $10^{-4}$ & $10^{-4}$\\
		\hline
		\hspace{-0.1cm} critic learning rate & \hspace{-0.1cm} $10^{-1}$ & $10^{-1}$\\
		\hline
		\hspace{-0.1cm} Max episodes, M & \hspace{-0.1cm} 4000 & 5000 \\
		\hline
		\hspace{-0.1cm} Max steps per episode & \hspace{-0.1cm} 200 & 300\\
		\hline
	\end{tabular}
	\vspace{-0.5cm}
\end{table}

%In this section we thoroughly study the transient behavior of the considered test system under two distinct 
%with only primary control layer and with the proposed WADCs as additional secondary control layer. The system transient response has been generated by adding an abrupt disturbances in overall load demand.

\vspace{-0.2cm}
\subsection{Comparative analysis under uncertainty in load demand}
\vspace{-0.02cm}
In this section, we do a thorough comparative analysis of the proposed model-based and model-free state feedback controllers under abrupt disturbances in load demand. To that end, we carried out the numerical simulations as follows:
Initially, the power network operates under steady-state conditions, meaning the power generation is equal to the overall load demand and thus the system rests at an equilibrium. Then, right at the start of simulation, at $t>0$, an abrupt change in overall load demand occurs as follows:
$P^n_L =(1+\Delta_L)P^0_L$, where $\Delta_L$ denotes the severity of the load disturbance and $P^0_L$ represents the initial load demand before the disturbance, which, for the considered test systems are; $0.77$pu for the 9-bus test system and $19.8$pu for 39-bus test network. With that in mind, two different simulation studies are conducted for both test systems. In the first study, we assume  $\Delta_L$ to be positive---meaning the overall load demand has suddenly been increased, which in other words can roughly be interpreted as a generator-trip event. That being said, for the 9-bus test system, we select  $\Delta_L = 0.7$ while for 39-bus network we select it to be $\Delta_L = 0.003$. For the second simulation, we assume $\Delta_L$ to be negative, meaning the overall load demand of the power network has suddenly been decreased, thus it can be seen as a load-trip event. With that in mind, for this simulation, we set  $\Delta_L$ for the 9-bus and 39-bus test systems to be  $-0.5$ and  $-0.001$, respectively.

The above disturbance is going to push the power system to a new equilibrium from its initial steady-state or it might potentially destabilize the whole system. The objective of the proposed WADCs is to hedge against this disturbance and enhance system transient stability by damping LFOs/ULFOs, improve frequency nadir, and bring the system back to its nominal operating conditions. 
The results are presented in Figs. \ref{fig:pd decrease case 9}, \ref{fig:pd inc case 9}, and \ref{fig:pd increase case 39}. To demonstrate the effectiveness of the proposed WADCs, a comparison of the system transient response with only primary control layer and with developed WADCs acting on top of them has also been presented. Note that, by primary control layer, we refer to the conventional control mechanism of the power system, which in the considered test system for synchronous machines are;  power system stabilizers (PSSs), governor, automatic voltage regulators (AVRs), and machine inertial response. Similarly, wind and solar power plants are acting in grid-forming mode with droop controllers and PI-type current and voltage regulators acting as their control mechanisms. Providing detailed descriptions and mathematical equations of synchronous machine, solar/wind models and their primary controllers is out of the scope of this study and readers are referred to \cite{sauer2017power, DudgeonITPWRS2007} for
synchronous machines and \cite{SoumyaITPWRS2022, WasynczukITPE1996, pico2014voltage} for wind and solar power plants.

%Notice that the primary controllers are already present in the system and the proposed WADC acts on top of them by sending additional control signals (by actively adjusting voltages, and power/valve-position setpoints via control input $\m u$). 
To that end, for the 9-bus test system, we can see that with both the proposed damping controllers there are significant improvements in LFOs/ULFOs, this can be verified by looking at the plots of the synchronous machine and solar/wind slips given in Figs. \ref{fig:pd decrease case 9} and \ref{fig:pd inc case 9}.  We can see that with only the primary control layer right after the load disturbance during transient period (the initial 2-3 seconds of the simulations) there are significant oscillations, while with the additional proposed secondary control layer, the oscillations have been damped out. Similar results have been achieved for the 39-bus test system as given in Fig. \ref{fig:pd increase case 39}. Thus, with the proposed WADCs the overall power system transient stability has been improved.

{\textcolor{black}{To further highlight the advantages of our feedback controllers, we also evaluated their performance under different initial power flow conditions and a line-to-ground fault on the 9-bus test system. In the former case, we varied the system loading, ran a power flow analysis to obtain the algebraic variables, and then introduced a step disturbance at the start of the simulation to create a transient response. In the latter scenario, we applied a line-to-ground fault on the transmission line between buses $4$ and $6$ at $t = 4$sec clearing it at $50$ ms from the near end and $200$ ms from the remote end. As shown in Figs. \ref{fig:pd dec case 9_pf} and \ref{fig:pd dec case 9_fau}, both proposed controllers effectively damp the system-wide oscillations, thus improving transient stability under these diverse operating conditions.}}

\begin{figure}
	\hspace{-0.25cm}\subfloat{\includegraphics[keepaspectratio=true,scale=0.53]{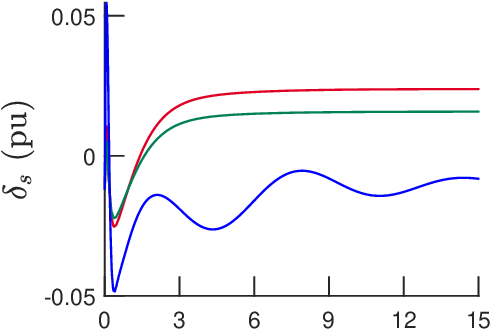}}{}\hspace{0.25cm}
	\subfloat{\includegraphics[keepaspectratio=true,scale=0.53]{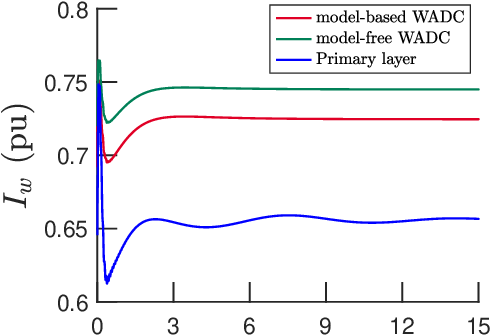}}{}{}
	
	\subfloat{\includegraphics[keepaspectratio=true,scale=0.53]{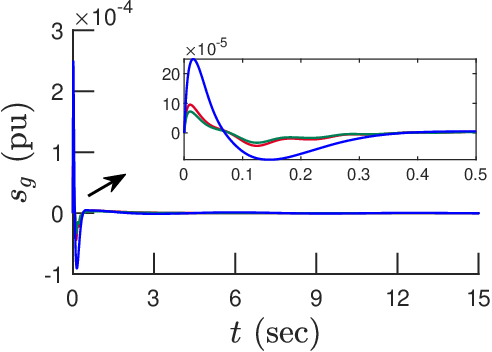}}{}{}\hspace{-0.05cm}
	\subfloat{\includegraphics[keepaspectratio=true,scale=0.53]{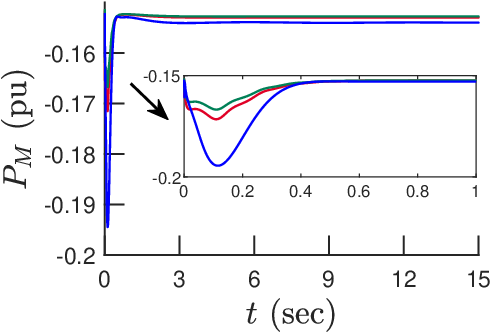}}{}{}\vspace{-0.69cm} \caption{\textcolor{black}{Comparative analysis under different initial power flow loading conditions and step disturbance for 9-bus test system: inverter angle of solar plant (top left), wind power plant  current output (top right), generator slip (bottom left), and motor power (bottom right).}}\label{fig:pd dec case 9_pf}
	\vspace{-0.5cm}
\end{figure}
\begin{figure}
	\hspace{-0.1cm}\subfloat{\includegraphics[keepaspectratio=true,scale=0.53]{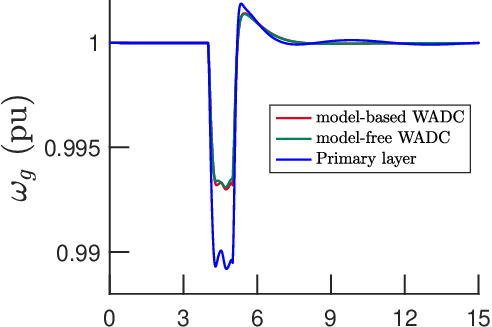}}{}\hspace{-0.05cm}
	\subfloat{\includegraphics[keepaspectratio=true,scale=0.53]{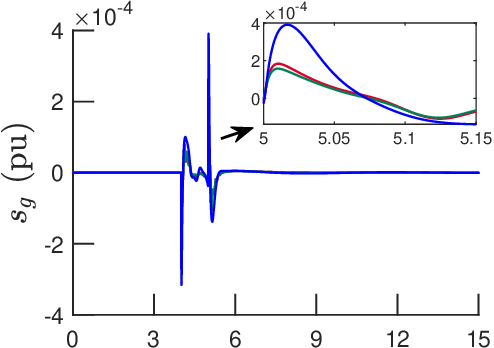}}{}{}
	
	\subfloat{\includegraphics[keepaspectratio=true,scale=0.53]{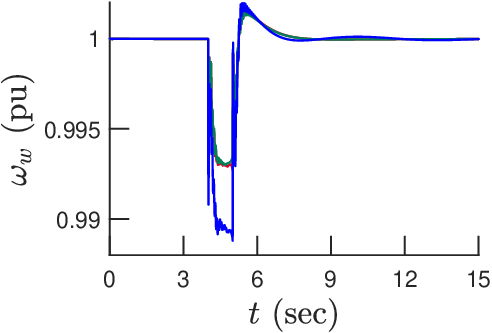}}{}{}\hspace{-0.05cm}
	\subfloat{\includegraphics[keepaspectratio=true,scale=0.53]{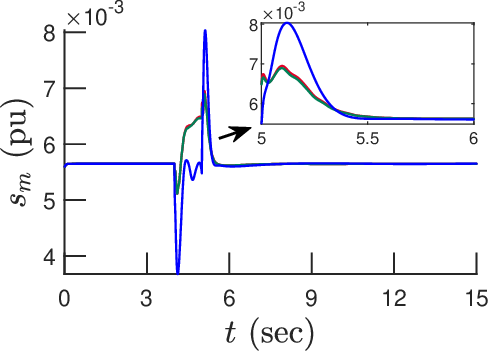}}{}{}\vspace{-0.69cm} \caption{\textcolor{black}{Comparative analysis under line-to-ground fault for 9-bus test system: generator speed and slip (above), angular speed of wind power plant (bottom left), and motor slip (bottom right).}}\label{fig:pd dec case 9_fau}
	\vspace{-0.5cm}
\end{figure}
\vspace{-0.24cm}
\subsection{Comparative analysis  under renewable uncertainty}
\vspace{-0.01cm}
Here we thoroughly study the transient behavior of the considered test system with and without the proposed damping controllers under uncertainty in the renewables. To that end, the simulations here are carried out as follows: At the beginning, the solar irradiance on all solar power plants  is set to be $1000$ $W/m^2$ (which is the standard solar irradiance), then, at $t>0$ we decrease the irradiance of PV plant connected at Bus 1 for 9-bus system and PV plants at Buses 30, 37 for 39-bus test system as follows:
$I^n_s =(1-\Delta_{I_s})I^0_s$, where $I^0_s$ is the solar irradiance before the disturbance and $I^n_s$ represent its new value, while $\Delta_{I_s}$ represent  the severity of disturbance and is set to be $0.1$ (meaning the irradiance is decreased by $10\%$).

To further add transients, the load disturbance from the previous section has also been kept intact and to further mimic realistic load uncertainty, we add some Gaussian noise to it also as: $P^n_L =(1+\Delta_L)P^0_L + \omega_L(t)$, where $\omega_L(t)$ denotes Gaussian noise with zero mean and standard deviation of $0.2\Delta_L$. Under these transient conditions, the system is initially stabilized with only the primary control layer and then the proposed damping controllers are also added on top of them.

The results are presented in Figs. \ref{fig:pd dec case 39} and \ref{fig:pd dec case2 39}. 
We can see that with both the proposed WADCs there is an improvement in system oscillations. This can be verified from the plots of the slips of all the power plants. Similarly, from Fig.  \ref{fig:pd dec case 39}  the oscillations in the power output of synchronous machines have also been damped out during transient periods. 
Note that, in Figs. \ref{fig:pd dec case 39} and \ref{fig:pd dec case2 39}, we compute the slip $\m s$, inverters relative angular speed $\m \omega$, and  DC-link voltages $\m V_{dc}$ from the state vector using the following equations \cite{SoumyaITPWRS2022}:
\begin{align*}
	\m\omega(t)\hspace{-0.09cm} = \hspace{-0.09cm} 1\hspace{-0.09cm}-\hspace{-0.0cm}k_d({\mP(t)}\hspace{-0.09cm}-\hspace{-0.09cm}\mP^*),\;\;
	\m V_{dc}(t) \hspace{-0.09cm}=\hspace{-0.09cm} \sqrt{\m E_{dc}(t)},\;\; \m s(t) \hspace{-0.09cm} =\hspace{-0.09cm} (w_c\hspace{-0.09cm} -\hspace{-0.09cm} \m\omega(t))/w_c
\end{align*}
%\begin{align*}
%	\m\omega(t)\hspace{-0.0cm} &= \hspace{-0.0cm} 1\hspace{-0.0cm}-\hspace{-0.0cm}k_d({\mP(t)}\hspace{-0.0cm}-\hspace{-0.0cm}\mP^*),\;\;
%	\m V_{dc}(t) \hspace{-0.0cm}=\hspace{-0.0cm} \sqrt{\m E_{dc}(t)},\\ \m s(t) \hspace{-0.0cm} &=\hspace{-0.0cm} (w_c\hspace{-0.0cm} -\hspace{-0.0cm} \m\omega(t))/w_c
%\end{align*}
where $k_d$ is the droop constant of the power  plant,  $\m P$ is real power output of the power plant, while $w_c$ is the overall weighted-average system frequency and is defined as \cite{SoumyaITPWRS2022, pico2022blackstart}:
\begin{align*}
\omega_c\hspace{-0.09cm} = \hspace{-0.09cm}\dfrac{\sum_{i=1}^G \omega_{g}(i)H_{g}(i)\hspace{-0.09cm} + \hspace{-0.09cm}\sum_{j=1}^S \omega_{s}(j)H_{s}(j)\hspace{-0.09cm} +\hspace{-0.09cm} \sum_{l=1}^W \omega_{w}(l)H_{w}(l)}{\sum_{i=1}^G H_{g}(i) + \sum_{j=1}^S H_{s}(j) + \sum_{l=1}^W H_{w}(l)}
\end{align*}
with $H_{g}, H_{s}$, and $H_{w}$ representing the inertial constants of synchronous machines, solar, and wind power plant, respectively.
\begin{figure}
	\hspace{-0.1cm}\subfloat{\includegraphics[keepaspectratio=true,scale=0.53]{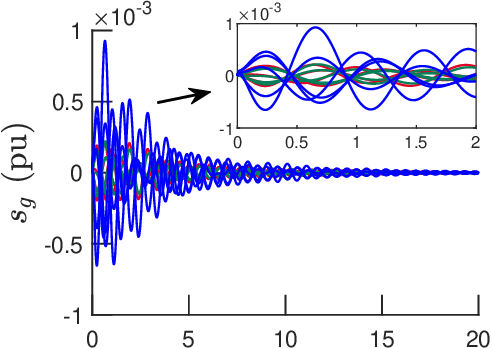}}{}\hspace{-0.05cm}
	\subfloat{\includegraphics[keepaspectratio=true,scale=0.53]{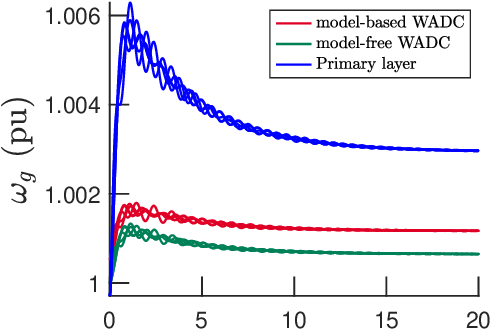}}{}{}	
	\subfloat{\includegraphics[keepaspectratio=true,scale=0.53]{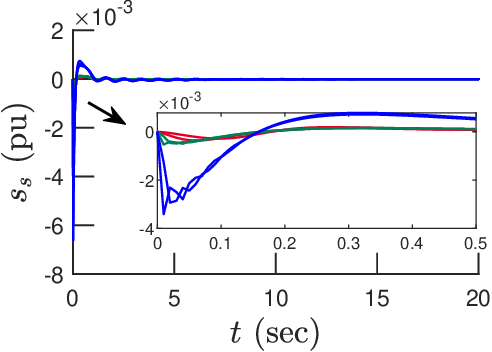}}{}{}\hspace{-0.05cm}
	\subfloat{\includegraphics[keepaspectratio=true,scale=0.53]{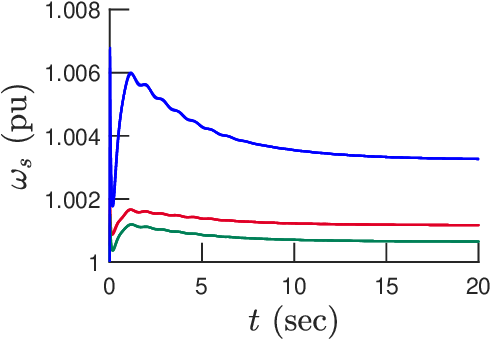}}{}{}\vspace{-0.69cm} \caption{Comparative analysis under $\Delta_L = -0.003$ and $\Delta_{I_s} = 0.1$ for 39-bus test system: all synchronous generators slip and rotor frequency (above),  slip and relative speed of all  solar power plants (below).}\label{fig:pd dec case2 39}
	\vspace{-0.5cm}
\end{figure}
That being said, we can also notice that for both test systems, there is a noticeable improvement in frequency nadir (the overall frequency dip or increase during the transient period). For 39-bus system, from Fig. \ref{fig:pd dec case2 39}  with only primary control layer after disturbance the synchronous machines frequency rises upto almost $1.006$pu while with proposed  WADCs it is limited only to around $1.0019$pu thus improving the overall system frequency nadir. Similar results are achieved for the inverter-based resources, we can see that there is a significant improvement in the relative frequency nadir for both solar and wind power plants.
These results are also corroborated  from Tab. \ref{tab:comp_speed_deviation} where the overall weighted-average frequency deviation of the power system is presented, we can see that for both test systems there are significant improvements under all types of disturbances.

\vspace{-0.3cm}
\subsection{Discussion on model-based vs model-free feedback control for the future power grid}
\vspace{-0.02cm}
%From the above simulations studies we can see that the performance of model-free and model-based controllers are almost similar with RL-based model-free providing some better results in term of damping system oscillations. Although both are solving same optimization problem, intuitively this maybe because the RL-based controller interact with the complete system model while in the model-based we simplify the nonlinearities (by assuming them to Lipschitz bounded). About computational complexity the model-free control design is more time consuming then the model-based. 
Notice that in both proposed approaches the main difference lies in the fact that in the case of model-free design, the controller only requires realtime information of state vector $\m x$, which these days can be obtained accurately as there exist highly robust state estimation algorithms. These estimation algorithms based on measurements received from a few PMUs can accurately estimate all the states including the states of solar/wind, composite loads, and synchronous machines \cite{nadeem2022robust}. While for the model-based approach, we not only need the information of state vector $\m x$ but also accurate information about the system (matrices $\m A$, $\m B_u$, and so on) in realtime, which can be highly problematic for a larger power system and with high penetration of uncertain renewables. \textcolor{black}{Hence whenever any large parametric uncertainty happens the model-based controller must be recalibrated to maintain reliable performance.} On the other hand, although the model-free approach seems reasonable for the future power grid as knowledge of the system is not required, however, the long training time, tuning of hyper-parameters, and no solid theoretical stability certificate/guarantee of the learned policy (since it merely solves an optimization problem and maximizes the long term reward) can be problematic. In the case of  model-based approach, if there is a solution to Proposition \ref{theorm:H_inf}, then, there is a solid theoretical guarantee (based on the Lyapunov stability notion) that obtained control policy will asymptotically stabilize the perturbed system (or in other, words push the original power system model back to its original steady state equilibrium).
	
\textcolor{black}{Furthermore, notice that, according to the control theory literature, the model-based feedback control approaches can easily be extended to other more robust feedback control designs without adding significant complexities to the controller architecture, such as $\mathcal{H}_\infty$ or $\mathcal{L}_\infty$-type feedback controller design which has shown to have superior performance \cite{YuFend, taha2019robust}. While in the case of RL-based model-free designs, one can only maximize the long-term reward (since no information about the system is available) then it might be even much more harder to solve robust feedback controller ($\mathcal{H}_\infty$, $\mathcal{L}_\infty$ and so on) design as compared to their model-based counterparts.}

\textcolor{black}{Moreover, in the case of model-based designs, if the LMI-based formulations (which are used in this paper) become numerically unstable (or ill-conditioned, which they sometimes become depending on the input system matrices)---see \cite{lofberg2009pre}, then, they can also be formulated as continuous-time algebraic Riccati equations (CAREs) formats which can be solved highly efficiently and there exist well-built off-the-shelf software's/tools available to solve these types of feedback control problems easily \cite{LiTAC1993, CloutierACC2002, AranyaICSM2019}.  In short, these are interesting future research avenues about how to efficiently and tractably solve (similar to the model-based approach) the optimal feedback (and robust optimal feedback) control problems via RL-based model-free approach for larger detailed renewables heavy power system models.}

\textcolor{black}{Finally, we also want to point out here that the RL-based approach requires about 6–9 hours of training for the IEEE 9-bus and 39-bus systems because it implements a dense, full-state controller that coordinates all generation sources (synchronous generators, solar farms, and wind plants) following disturbances. A practical way to shorten training time is to promote sparsity or adopt a distributed/decentralized method. By comparison, the model-based approach is faster to set up—taking roughly 2 seconds for the 9-bus system and 54 seconds for the 39-bus system—but again requires an accurate system model and the ability to handle nonlinearities. Both methods remain viable for real-world applications, as modern measurement technologies (e.g., PMUs) and high-performance estimation algorithms \cite{SebastianITPWRS2020, SebastianITCNS2021, Liu2021TPWRS} now make full-state feedback achievable with only a few well-placed sensors \cite{NadeemCDC2022, NadeemITPWRS2022}.}

\begin{table}
\vspace{-0.2cm}
	\caption{Comparison of Computational time between the proposed two approaches.}
	\label{tab:Table 1}
	\setlength{\tabcolsep}{1.5pt}
	\small
	\centering
	\begin{tabular}{|c|cc}
		\hline
		\begin{tabular}[c]{@{}c@{}} Network \end{tabular} &
		\multicolumn{2}{c|}{\begin{tabular}[c]{@{}c@{}} Time required to compute the \\ optimal control policy \end{tabular}} \\ \hline \hline
		& \multicolumn{1}{c|}{model-based WADC} & \multicolumn{1}{c|}{model-free WADC}\\ \hline
		 $9$-bus test system  & \multicolumn{1}{c|}{2.2 seconds}                      & \multicolumn{1}{c|}{6.28 hrs}   \\ \hline
		 $39$-bus test system & \multicolumn{1}{c|}{53.2 seconds}                      & \multicolumn{1}{c|}{8.32 hrs}     \\ \hline
	\end{tabular}
	%\vspace{-0.6cm}
\end{table}

\setlength{\textfloatsep}{10pt}
\begin{table}
%	\scriptsize
	\vspace{-0.1cm}%\hspace{-0.21cm}
		\setlength{\tabcolsep}{1.5pt}
	\small
	\centering 
	\caption{Comparative analysis of the deviation of overall weighted-average system frequency for the whole power system under various disturbances in load and renewable generations.}\label{tab:comp_speed_deviation}
	%	\vspace{-0.2cm}
	\renewcommand{\arraystretch}{1.65}
	\begin{threeparttable}
		\begin{tabular}{|c|c|c|c|c|}
			\hline
			\multirow{2}{*}{Network} & \multirow{2}{*}{Disturbance} & \multicolumn{3}{c|}{$\left({\omega_{0}-  \omega_c} \right)$}                                                 \\ \cline{3-5} 
			%	&                   & \multicolumn{3}{c|}{$\tilde{t}_3$} \\ \cline{3-5} 
			&    &   Primary            &   model-based    &  model-free  \\ \hline \hline
			\multirow{3}{*}{$9$-bus} &      $\Delta_L = 0.7$           &  ${0.015}$     &  ${0.004}$     & ${0.0036}$      \\ \cline{2-5} 
			&      $\Delta_L = -0.5$     &  $-0.013$     &  $-0.0037$     &   $-0.0038$      \\ \cline{2-5} 
			&      $\Delta_L\hspace{-0.09cm} =\hspace{-0.09cm} 0.6$, $\Delta_{I_s}\hspace{-0.09cm} =\hspace{-0.09cm} 0.1$                  &   ${0.019}$  &    $0.0027$    &   $0.0029$     \\ \hline
			\multirow{3}{*}{$39$-bus} &      $\Delta_L = 0.003$            &  ${2.6\hspace{-0.09cm}\times\hspace{-0.09cm} 10^{-3}}$     &   ${1.3\hspace{-0.09cm}\times\hspace{-0.09cm} 10^{-5}}$    &  ${1.9\hspace{-0.09cm}\times\hspace{-0.09cm} 10^{-5}}$     \\ \cline{2-5} 
			&       $\Delta_L\hspace{-0.09cm} =\hspace{-0.09cm} -0.001$        &  ${-1\hspace{-0.09cm}\times\hspace{-0.09cm} 10^{-3}}$     &    ${-2\hspace{-0.09cm}\times\hspace{-0.09cm} 10^{-5}}$   &   ${-2\hspace{-0.09cm}\times\hspace{-0.09cm} 10^{-5}}$   
			\\ \cline{2-5} 
			&     $\Delta_L\hspace{-0.09cm} =\hspace{-0.09cm} 0.04$, $\Delta_{I_s}\hspace{-0.09cm} =\hspace{-0.09cm} 0.1$         &  ${7.9\hspace{-0.09cm}\times\hspace{-0.09cm} 10^{-3}}$ &   ${1.9\hspace{-0.09cm}\times\hspace{-0.09cm} 10^{-4}}$    &  ${2.1\hspace{-0.09cm}\times\hspace{-0.09cm} 10^{-4}}$       \\ \hline
		\end{tabular}
	\end{threeparttable}
	\vspace{-0.1cm}
\end{table}
\setlength{\floatsep}{10pt}
\vspace{-0.4cm}
\section{Concluding Remarks}\label{sec: conclusion}
\vspace{-0.4cm}
\textcolor{black}{In this paper, we addressed the optimal feedback control problem for renewable-heavy power systems by modeling detailed solar, wind, and composite load dynamics. We explored two distinct approaches: a \emph{model-free} design using DDPG-based reinforcement learning and a \emph{model-based} method grounded in Lyapunov theory. Our simulations on the 9-bus and 39-bus IEEE systems show that adding an extra wide-area feedback control loop can significantly improve transient stability after large disturbances. We also performed a thorough comparison of these methods, highlighting their advantages and disadvantages. Importantly, we do not claim to offer a universal solution or a strict guideline on whether to choose model-free or model-based control. Instead, we aimed to illustrate how each strategy behaves, analyze their pros and cons, and suggest future research directions. We hope these insights will be valuable to both researchers and industry practitioners working in designing feedback control strategies for power grids.}

The limitations of the presented work are as follows:  Firstly, for the both proposed WADCs, the designed control policy is dense (requires all the power plant to take part in the control action) and not sparse. \textcolor{black}{Secondly, both approaches does not take into account delays and/ or cyber-attacks in the communication network.} Lastly, for the proposed model-free WADCs, the long training time and no solid theoretical stability guarantee (similar to its model-based counterpart which guarantees stability using Lyapunov criterion) are the drawbacks. Future work will be about addressing the aforementioned limitations and designing stability-aware tractable model-free WADC.
\vspace{-0.5cm}
\bibliographystyle{unsrt}
\bibliography{mybibfile}
\vspace{-0.5cm}
%\newpage 

\appendix

%\vspace{-0.2cm}
\section{Details of considered power system model}\label{appndix:ninth Gen_dynamics}
%\vspace{-0.2cm}
\noindent In  the NDAE \eqref{equ:PSModel}, $\m x_a(t) \in \mbb{R}^{n_a}$ is given as:
\begin{align}\label{eq:x_a}
	\m x_a(t)  = \bmat{\m V_{\mr{Im}}^\top&\m V_{\mr{Re}}^\top&\m I_{\mr{Im}}^\top& \m I_{\mr{Re}}^\top&}^\top
\end{align}
where $ \m V_{\mr{Im}}\hspace{-0.0cm}=\hspace{-0.05cm} \hspace{-0.0cm}\{V_{\mr{Im}_i}\}_{i\in \mc{N}} $ represent the imaginary part of voltages and $\m V_{\mr{Re}}\hspace{-0.1cm}= \hspace{-0.051cm}\{V_{\mr{Re}_i}\}_{i\in \mc{N}}$ are the real parts of voltage phasors. Similarly, $ \m I_{\mr{Im}}\hspace{-0.1cm}= \hspace{-0.1cm}\{I_{\mr{Im}_i}\}_{i\in \mc{N}}, \m I_{\mr{Re}}\hspace{-0.1cm}= \hspace{-0.1cm}\{I_{\mr{Re}_i}\}_{i\in \mc{N}}$ are the imaginary and real parts of current phasors. The input vector $\m u(t)\in \mbb{R}^{n_u}$ lumps the control inputs for all the power plants and is given as:
\vspace{-0.1cm}
\begin{align}\label{eq:u}
	\m u(t) = \bmat{\m u_G^\top & \m u_S^\top& \m u_W^\top}^\top
\end{align}
where $\m u_G(t) = \bmat{\m P_v^{*\top} & \m V_g^{*\top}}^\top \in \mbb{R}^{2G}$ represents the control inputs of synchronous generators with $\m V_g^*$ denoting voltage set points of AVRs of the generators and $\m P_v^*$ representing turbine valve position set-points. Similarly, $ \m u_S(t) = \bmat{\m P_s^{*\top} &\m V_s^{*\top}}^\top \in \mbb{R}^{2S}$ and  $ \m u_W(t) = \bmat{\m P_w^{*\top}& \m V_w^{*\top}}^\top \in \mbb{R}^{2W}$ are the control inputs for solar and wind power plants 
with $\m P_s^*$, $\m P_w^*$ and  $\m V_s^*$,  $\m V_w^*$ denoting active power and voltage set-points for solar and wind-based power plants, respectively.

Also, in the system dynamics \eqref{equ:PSModel} the disturbance vector $\m w(t)\in \mbb{R}^{n_w}$ is modeled as  $\m w(t) = \bmat{\m I_{s}^\top& \m P_{L}^\top}^\top$ in which  $\m I_s$ is the solar irradiance $(W/m^2)$ and $\m P_L$ represent the system active power load demand.  

Moreover, the vector $\m x_d(t)\in \mbb{R}^{n_d}$ in \eqref{equ:PSModel} lumps the dynamic states of traditional power plant, solar, wind-based plant, and loads which is expressed as:
\begin{align}\label{eq:x_d}
	\m x_d(t) = \bmat{\m x_G^\top&\m x_S^\top&\m x_W^\top&\m x_M^\top}^\top
\end{align} 
where $\m x_G(t)$ represents the dynamic states of the conventional power plant, $\m x_S(t)$ lumps dynamic states of solar power plant, $\m x_W(t)$ contains wind power plant states, and $\m x_M(t)$  denotes the dynamic states of the motor-based loads. 

That being said, we model the conventional/traditional power plant via a comprehensive $9^{th}$-order dynamical model representing generators swing equations, excitation system, governor, and
turbine dynamical models. Then, $\m x_G(t) \in\hspace{-0.01cm} \mbb{R}^{9G}$ is  represented as follows \cite{sauer2017power,SoumyaITPWRS2022}:
\begin{align*}
	\m x_G(t)\hspace{-0.0cm} = \hspace{-0.0cm}\bmat{ \m \delta_{\mr{g}}^\top\,\,\m\omega_{\mr{g}}^\top \,\, \m E_{\mr{fd}}^\top\,\,\,  \m E_{\mr q}^\top\,\,  \m E_{\mr d}^\top\,\,\m T_\mr{M}^\top\,\,\, \m P_{v}^\top\,\, \m r_{f}^\top\,\,\m v_{a}^\top}^\top \hspace{-0.2cm} \label{eq:stateSyncGen}
\end{align*}
where  $\m \delta_{\mr g}$ denotes the rotor angle of the generator, $\m \omega_{\mr g}$ is the generator rotor speed, $\m E_{\mr{fd}}$ is the field voltage of the generator,  $\m E_{\mr q}$, $\m E_{\mr d}$ are the transient voltages of the generator along q-axis and d-axis, respectively, $\m T_{\mr M}$ represents torque of the prime mover of the turbine, $\m P_{v}$ denotes turbine valve position,  while $\m r_{f}$ and $\m v_{a}$  are the stabilizer output and amplifier voltages, respectively. For further detailed explanations about the generator dynamics used in this study readers are referred to \cite{sauer2017power}.

In \eqref{eq:x_d}, we model the solar plant dynamics via  $12^{th}$-order dynamical system representing solar plants working in grid-forming mode as detailed in \cite{SoumyaITPWRS2022, WasynczukITPE1996, nadeem2023RL}. The complete dynamical equations modeling solar farms represents, DC side dynamics modeling PV arrays, DC link, current/voltage regulators dynamic equation, and  AC side dynamics representing LCL filter equations, and AC/DC converter dynamics. That being said, the state vector for the solar plants  $\m x_S(t) \in \hspace{-0.05cm}\mbb{R}^{12S} $ can be written as follows:
\begin{align*}
\m x_S(t)\hspace{-0.0cm} =\hspace{-0.0cm} \bmat{\m E_{\mr{dc}}^\top\,\m P_{s}^\top\,\m Q_{s}^\top\, \m \delta_{\mathrm{s}}^\top\,\m i_{\mr{df}}^\top\,\,\m i_{\mr{qf}}^\top\,\m v_{\mr{dc}}^\top \,\,\m v_{\mr{qc}}^\top \,\,\m z_{\mr{df}}^\top \,\,\m z_{\mr{qf}}^\top \,\, \m z_{\mr{do}}^\top\, \m z_{\mr{qo}}^\top}^\top
\end{align*}
%\begin{align}
%	\m x_S\hspace{-0.07cm} =\hspace{-0.07cm} \bmat{\m i_{\mr{dqf}}^\top\,\,\m E_{\mr{dc}}^\top\,\, \m v_{\mr{dqc}}^\top \,\,\m \delta_{\mathrm{c}}^\top\,\,\m Q_{e}^\top\,\,\m P_{e}^\top\,\,\m z_{\mr{dqf}}^\top \,\, \m z_{\mr{dqo}}^\top}^\top\hspace{-0.15cm} 
%\end{align}
\noindent where $\m{E}_{\mathrm{dc}}$ denotes the energy stored in the DC side capacitor,  $\m{{P}}_{s}$ is the real power while $\m{{Q}}_{s}$ represents the reactive power injected by solar plants to the power grid, $\m{\delta}_{{s}}$ is the  solar plant relative angle, $\m{i}_{\mr{df}}, \m{i}_{\mr{qf}}$, $\m{v}_{\mr{df}}, \m{v}_{\mr{qf}}$ are the dq-axis current and voltages from the solar power plants at their terminal bus, respectively, while  $\m{z}_{\mr{df}}$,$\m{z}_{\mr{qf}}$,$\m{z}_{\mr{do}}$,$\m{z}_{\mr{do}}$ are the dynamic states of voltage and current regulators along dq-axis, respectively.  Interested readers are referred to \cite{SoumyaITPWRS2022, WasynczukITPE1996} for further in-depth explanations of the solar power plant model considered in this study. 

Similarly, the dynamics of wind power plants have been modeled via $13^{th}$-order dynamical system as detailed in \cite{pico2022blackstart,pico2014voltage}. The overall model describes double-fed induction generator (DFIG)-based wind turbine acting in a grid forming-mode and thus the state vector for the wind plants $\m x_W(t) \in \hspace{-0.05cm}\mbb{R}^{13W}$ can be expressed as follows:
\begin{align}
	\hspace{-0.0cm}\m x_W(t)\hspace{-0.0cm} =\hspace{-0.0cm} \bmat{\m \delta_{\mathrm{w}}^\top\,\,\m E_{\mr{dcw}}^\top\,\,\m i_{\mr{wf}}^\top\,\, \m v_{\mr{wc}}^\top \,\,\m P_{w}^\top \,\,\m Q_{w}^\top\,\, \m z_{\mr{dc}}^\top\,\,\m z_{\mr{wf}}^\top \,\, \m z_{\mr{wo}}^\top}^\top
\end{align}
where $\m{\delta}_{{w}}$ represents the inverter  relative angle of the wind plant, $\m{E}_{\mathrm{dcw}}$ is the energy stored in the DC link capacitor of the wind plant, $\m{i}_{\mr{wf}}=[\m i_\mr{dw_f}^\top\,\,\m i_\mr{qw_f}^\top]^\top$ denotes the inverter output current at the terminal bus along dq-axis, similarly $\m{v}_{\mr{wc}} \hspace{-0.15cm}=\hspace{-0.02cm} [\m v_\mr{dw_c}^\top\,\,\m v_\mr{qw_c}^\top]^\top$ represents the dq-axis AC capacitor voltages, $\m{{P}}_{w}$, $\m{Q}_{w}$ respectively denote the  real and reactive power output from the wind power plant to the power grid, while $\m{z}_{\mr{dc}}$, $\m{z}_{\mr{wf}}\hspace{-0.1cm}=\hspace{-0.04cm}[\m z_\mr{dw_f}^\top\,\,\m z_\mr{qw_f}^\top]^\top$, $\m{z}_{\mr{wo}}\hspace{-0.1cm}=\hspace{-0.04cm}[\m z_\mr{dw_0}^\top\,\,\m z_\mr{qw_0}^\top]^\top$, are the states of the current and voltage regulators of the wind power plant along dq-axis, respectively. 
Further details explanations about the wind power plants used in this work can be found in \cite{pico2022blackstart,pico2014voltage}.

% \begin{align}
	% 	\m x_R\hspace{-0.09cm} =\hspace{-0.09cm} \bmat{\m \delta_{\mathrm{inv}}^\top\,\m v_{{dq_c}}^\top \,\m i_{{dq_f}}^\top\, \m P_{e}^\top\,\m Q_{e}^\top\,\m E_{{dc}}^\top\,\m z_{{dq_f}}^\top\,\m z_{{dq_0}}^\top}^\top\hspace{-0.09cm} \in\mbb{R}^{12R}
	% \end{align}

Furthermore, in \eqref{eq:x_d} $\m x_M(t) \in \mbb{R}^{L_k}$ denotes the speed $\omega_{\mr m_i}(t)$ of the motor-based load and is given as \cite{krause2013}:
\vspace{-0.0cm}
\begin{equation}
	\dot{\omega}_{\mr{m}_i}(t) = \frac{1}{2H_{\mr M_i}}(T_{{e_i}} - T_{\mr M_i}) \label{eq:omegam}
\end{equation}
where $i\in \mc{L}$, $T_{{e_i}}$ represent electromagnetic torque of motor, $T_{{M_i}}$ is  mechanical torque while $H_{\mr M_i}$ is the inertia constant of the motor-based load \cite{krause2013}. 
The constant impedance/power-based loads satisfy the following equations \cite{NERC}: 
\vspace{-0.0cm}
\begin{subequations}
	\begin{align}
		\begin{split}\label{eq:load_dyn_z}
			I_{z_i}Z_i+ V_{z_i}  &= 0
		\end{split}\\
		(P_{p_i}+Q_{p_i})+ \mr{conj}(I_{p_i})V_{p_i} &= 0
	\end{align}
\end{subequations}
where $i\in \mc{L}$,  $Z_i$ represent the impedance of the load, $P_{p_i}$, $Q_{p_i}$ are the real and reactive power of constant power loads, the term $\mr{conj}$ denotes the complex conjugate operator while $I_{z_i}$, $V_{z_i}$,   $I_{p_i}$, $V_{p_i}$ are the current and voltage phasors of buses connected to constant impedance and constant power loads, respectively. This completes the modeling of differential equations  \eqref{equ:PSModel-a} of the power system NDAE model \eqref{equ:PSModel}.

The algebraic constraints \eqref{equ:PSModel-b}  model the current balance equation between all the power plants and loads and thus can be written as \cite{sauer2017power}: 
\begin{gather}
	\underbrace{\begin{bmatrix}
			\m{{I}}_{G}(t) \\ \m{{I}}_{S}(t) \\ \m{{I}}_{L}(t)\\ \m{{I}}_{W}(t)
	\end{bmatrix}}_{\m I(t)}
	\hspace{-0.0cm}-\hspace{-0.0cm}
	\underbrace{\begin{bmatrix}
			\m{Y}_{GG} \,\,\, \m{Y}_{GS} \,\,\,\, \m{Y}_{GL} \,\,\,\, \m{Y}_{GW}\\
			\m{Y}_{SG} \,\,\,\, \m{Y}_{SS} \,\,\,\, \m{Y}_{SL} \,\,\,\, \m{Y}_{SW}\\
			\m{Y}_{LG} \,\,\,\, \m{Y}_{LS}\,\,\,\, \m{Y}_{LL} \,\,\,\, \m{Y}_{LW}\\
			\m{Y}_{WG} \,\, \m{Y}_{WS} \,\, \m{Y}_{WL} \,\, \m{Y}_{WW}
	\end{bmatrix}}_{\m Y}\hspace{-0.01cm}
	\underbrace{\begin{bmatrix}
			\m{{V}}_{G}(t) \\ \m{{V}}_{S}(t) \\ \m{{V}}_{L}(t) \\ \m{V}_{W}(t)
	\end{bmatrix}}_{\m V(t)}\hspace{-0.04cm} =\hspace{-0.04cm}  \m{0} \label{eq:transalgebraic}
\end{gather}
\noindent where $\m V(t)$, $\m I(t)$  are the net voltages and currents and $\m{Y}$ represents the system admittance matrix. In \eqref{eq:transalgebraic} the terms $\m{{I}}_{G}\hspace{-0.13cm}= \hspace{-0.13cm}\{I_{Re_i}\}_{i\in \mc{G}}\hspace{-0.05cm}+\hspace{-0.05cm}j \{I_{Im_i}\}_{i\in \mc{G}}\hspace{-0.05cm}$, $\m{{V}}_{G} \hspace{-0.06cm}= \hspace{-0.06cm}\{V_{Re_i}\}_{i\in \mc{G}}\hspace{-0.05cm}+\hspace{-0.05cm} j\{V_{Im_i}\}_{i\in \mc{G}}\hspace{-0.05cm}$,  denote real and imaginary parts of synchronous generators terminal buses current and voltage phasors, respectively. Similarly, $\m{{I}}_{S}$, $\m{{I}}_{W}$, $\m{{I}}_{L}$ and $\m{{V}}_{S}$, $\m{{V}}_{W}$, $\m{{V}}_{L}$ denote current and voltage phasors of all solar, wind-based power plants, and loads, respectively. 
\begin{figure}[htp]
	\centering
	{\includegraphics[keepaspectratio,scale=0.1]{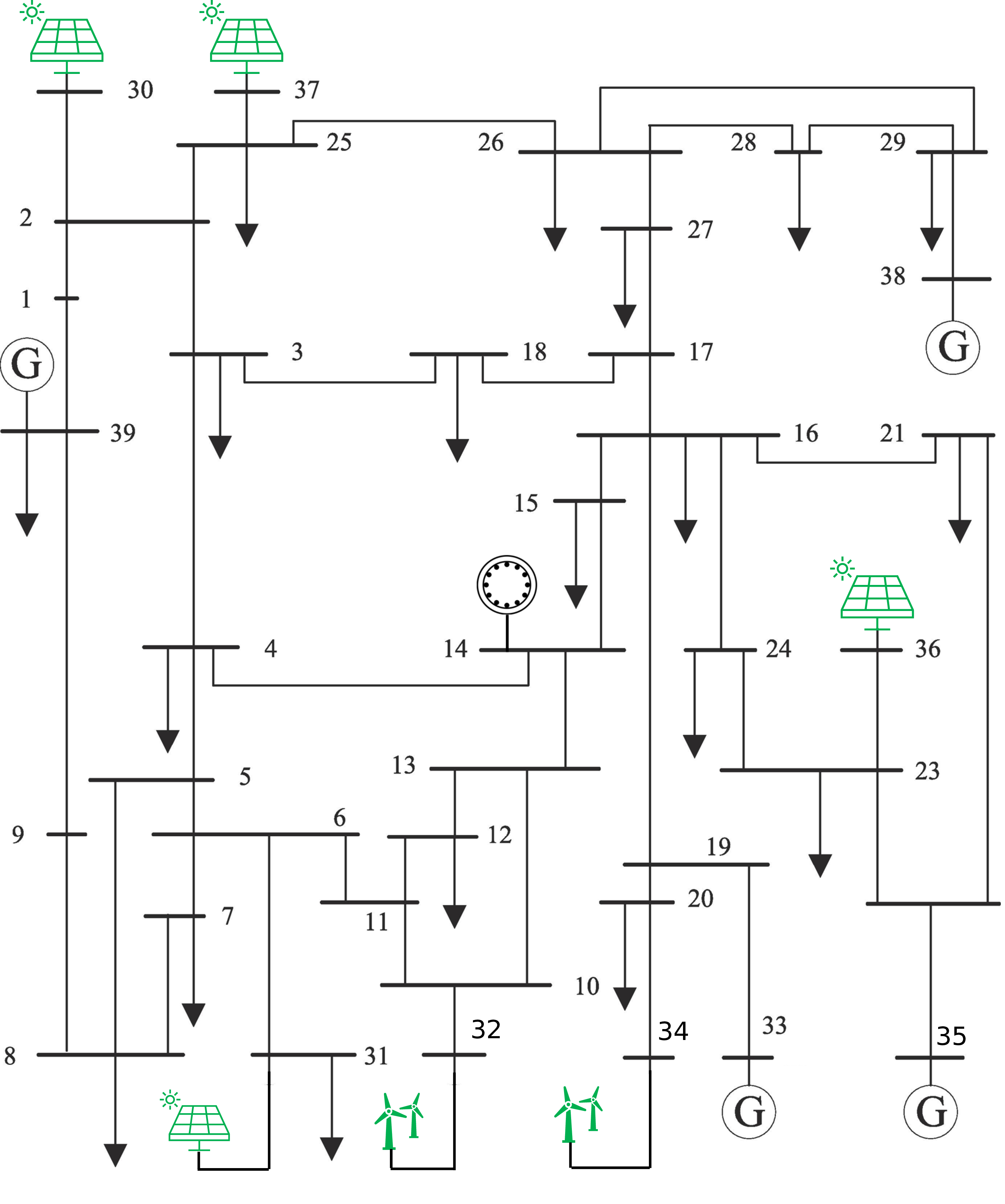}}\caption{Diagram of the 39-bus test system with PV plants at Buses $30$, $31$, $36$, $37$, wind-based plants at Buses $32$, $34$, conventional power plants at Buses $33$, $35$, $38$, $39$, and a motor-based load at Bus $14$.}\label{fig:Diagram39}
\end{figure}
\begin{figure}[htp]
	\centering
{\includegraphics[keepaspectratio,scale=0.09]{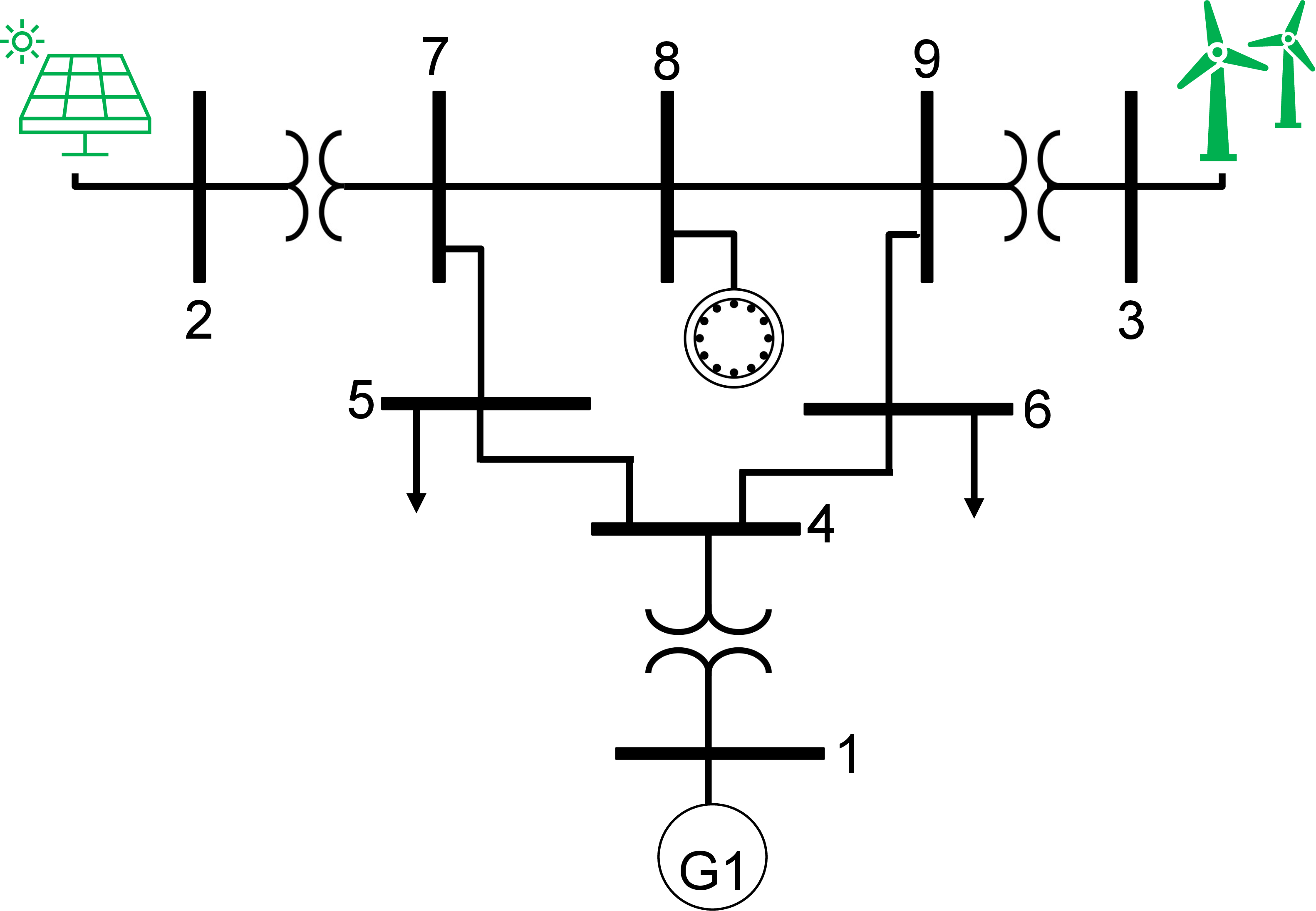}}\vspace{-0.2cm}\caption{Diagram of the 9-bus system with a PV-based plant at Bus $2$, a wind plant at Bus $3$, steam-based plant at Bus $1$, and a motor-based load at Bus $8$.}\label{fig:Diagram}
\end{figure}
\end{document}